\documentclass[review]{elsarticle}
\usepackage{lineno,hyperref}
\usepackage{soul, framed}
\journal{Naval Research Logistics(NRL)}

\biboptions{authoryear}

\usepackage{amssymb}
\usepackage{amsmath}
\usepackage{mathrsfs}
\usepackage{graphicx}
\usepackage{booktabs}
\usepackage{setspace}
\usepackage{url}
\usepackage{makecell}
\usepackage{amsthm}
\usepackage{amstext}
\usepackage{multirow}
\usepackage{float}
\usepackage{setspace}
\usepackage{color}
\usepackage{soul}
\usepackage[linesnumbered,ruled]{algorithm2e}
\usepackage{changepage}

\newtheorem{theorem}{Theorem}
\newtheorem{lemma}{Lemma}

\newproof{pot}{Proof of Theorem \ref{thm2}}

\begin{document}
\begin{frontmatter}
	\title{Serial-batch scheduling to minimise the total weighted late work}

	\author[a]{Yao-Wen Sang}
	\author[a]{Naiming Xie}
	\author[a]{Jian Chen\corref{cor}}
	\ead{jchen@nuaa.edu.cn}
	\cortext[cor]{Corresponding author}
	\author[b]{Malgorzata Sterna}
	\author[b]{Jacek Blazewicz}

	\address[a]{College of Economics and Management, Nanjing University of Aeronautics and Astronautics, Nanjing, Jiangsu 210016, PR China}
	\address[b]{Institute of Computing Science, Poznan University of Technology, Piotrowo 2, Poznan 60–965, Poland}
	\begin{spacing}{1.31}
		\begin{abstract}
			We study the problem of scheduling jobs on a serial-batch machine with the aim of minimising the total weighted late work. In a serial-batch setting, jobs within a batch are processed sequentially, and none are removed from the machine until the last job in the batch completes its processing. The processing time of a batch is the sum of the processing times of the jobs within it, and the completion time for each job in the batch is equal to the makespan of the jobs in the batch. When a new batch begins, a constant setup time is required for the machine. We show that minimising the total weighted late work in this environment is $NP$-hard even if all jobs have a common due date and unit weight.
			For the general problem, we present a pseudo-polynomial time dynamic programming algorithm.
			Additionally, we explore two special cases, i.e., one with a common due date and another with an agreeable condition among due dates, processing times and weights. For both special cases, we develop specialised pseudo-polynomial time dynamic programming algorithms. The proposed approaches are equipped with specialised acceleration techniques to enhance their computational performance. The extended experiments demonstrate that the dynamic programming algorithms outperform Gurobi in time efficiency.\\
		\end{abstract}

		\begin{keyword}
			Scheduling, Serial-batch, Late Work, $NP$-hard, Dynamic Programming
		\end{keyword}
	\end{spacing}
\end{frontmatter}

\section{Introduction}

In recent years, serial-batch scheduling (cf. \citealp{gahm2022scheduling, uzunoglu2023machine, wu2023exact}) and late work-related scheduling (cf. \citealp{sterna2021late, chen2023revisit, justkowiak2023single, sang2023single, shabtay2023new}) have attracted significant attention, each independently.
In serial-batch setting, jobs within a batch are processed sequentially, and none are removed from a machine until the last job in the batch completes its processing. The processing time of a batch is determined by the sum of the processing times of the jobs within it, and the completion time for each job in the batch is equal to the completion time of the whole batch. Moreover, when a new batch begins, a constant setup time is required for the machine. Serial-batch scheduling finds widespread applications across various industries, including aluminium-making process \citep{pei2019single}, production management \citep{karakutuk2024two}, semiconductor manufacturing \citep{cerekci2010dynamic}, and drilling process of fuel injector production \citep{ozturk2022serial}.
In late work-related scheduling, the late work of a job is defined as the job processing time executed after its due date. Late work-related scheduling has been thoroughly explored in domains such as control systems \citep{blazewicz1984scheduling}, agriculture \citep{blazewicz2004open}, logistics industry \citep{chen2022pareto}, software development \citep{sterna2011survey}, and flexible manufacturing \citep{sterna2007late}.

This paper presents the first attempt to integrate serial-batch processing with late work minimisation. For this reason, we have focused on the theoretical studies of the newly proposed model, i.e., on establishing its computational complexity. Determining the complexity of the basic model, i.e., proving its polynomial time solvability or strong/week $NP$-hardness, is crucial for directing further research, particularly on more complex models reflecting various real world scenarios. The problem formulation builds on established concepts in scheduling theory and their motivations rooted in real-world applications, where sequential batching, setup operations, and delivery-critical performance are all integral. Such characteristics are commonly observed in industrial domains.

\begin{spacing}{1.4}
	Problems involving serial-batch processing and late work minimisation can be found in cold-chain logistics for example, in the batch picking and packing of chilled products for e-commerce grocery fulfilment or fresh food distribution. In these operations, customer orders are often grouped into batches, corresponding to a vehicle or a courier to which these orders are assigned. Batches are processed sequentially at temperature-controlled workstations. The entire batch must remain on the station until all items are picked and packed, and a fixed setup time is typically required between batches to clean equipment, reconfigure packaging materials for different product types, and preparing a courier for departure to customers. Each order is associated with a due date reflecting its latest acceptable departure time for delivery. If processing continues beyond this time, the portion of time consumed after the due date constitutes late work directly impacting product freshness, increasing spoilage risk, and causing missed delivery slots resulting in customer dissatisfaction. Since orders may differ in the value or priority, minimising the total weighted late work has a beneficial effect on ensuring cold-chain reliability, reducing operational losses, and maintaining customer satisfaction.

	Another motivating scenario is observed in aerospace composite manufacturing, where production efficiency and schedule stability are paramount. In the prepreg cutting stage, composite materials are first fixed onto a cutting table using laser alignment tools, followed by sequential contour cutting of each part. This process involves a setup time to position and secure the material before cutting begins, and the cutting table remains occupied until the full batch of parts is completed, which reflects a serial-batch processing structure. Each job's processing time depends on the complexity of its geometric outline, while due dates are typically constrained by the timing of downstream layup and curing operations. When cutting extends beyond a job's due date, the late portion of processing represents a direct delay to subsequent steps on tightly synchronized composite production lines. Moreover, prolonged exposure of uncured prepreg to ambient air increases the risk of moisture absorption, which may degrade the material's mechanical properties. Minimising the total weighted late work thus improves production throughput and helps prevent material deterioration in aerospace composite manufacturing.
\end{spacing}

As we have mentioned, serial-batch processing has not been studied in the context of late work minimisation so far. For this reason, we focused on the complexity analysis of the basic model. By using the three-field notation, the general problem considered in this paper is denoted as $1|s\text{-}batch|Y_w$, where $s\text{-}batch$ represents the serial-batch machine environment, and $Y_w$ denotes the total weighted late work. Additionally, we analyse two special cases of this problem with a common due date, $1|s\text{-}batch,d_j = d|Y_w$, as well as with agreeable conditions, $1|s\text{-}batch,d_j\uparrow p_j \uparrow w_j\downarrow|Y_w$, where $d_j \uparrow p_j\uparrow w_j\downarrow$ indicates the agreeable condition among processing times, due dates and weights, i.e., for any two jobs $J_i$ and $J_j$, if $d_i\geq d_j$, then $p_i\geq p_j$ and $w_i\leq w_j$.

The results available in the literature show that the complexity of serial-batching problems depends on the objective function used for evaluating the quality of schedules, and scheduling problems with late work minimisation are intractable even for a single machine. The most important theoretical results related to our research are collected in Table \ref{Tab: Overview of the complexity results}. \cite{coffman1990batch} considered the problem of minimising the total completion time, i.e., $1|s\text{-}batch|\sum C_j$. They showed that among all schedules, the one leading to the minimum total completion time has the jobs sequenced in non-decreasing order of processing times, and they constructed it in $O(n\log n)$ time. With regard to the objective of minimising the total weighted completion time, \cite{albers1993complexity} showed that problem $1|s\text{-}batch|\sum w_jC_j$ is strongly $NP$-hard. When due dates are involved, \cite{webster1995scheduling} solved the problem of minimising the maximum lateness, $1|s\text{-}batch|L_{\max}$, by proposing a dynamic programming (DP) algorithm with a time complexity of $O(n^2)$.
For the total tardiness minimisation problem $1|s\text{-}batch|\sum T_j$, \cite{du1990minimizing} proved that it is weakly $NP$-hard and
\cite{baptiste2001minimizing} provided a DP algorithm with running time $O(n^{11}\max(p_{\max}, s)^7)$, where $s$ denotes setup time and $p_{\max}$ denotes the maximum job processing time.
\cite{brucker1996single} solved the problem with the minimum number of late jobs criterion, $1|s\text{-}batch|\sum U_j$, in $O(n^3)$ time, and proved that the weighted problem $1|s\text{-}batch|\sum w_jU_j$ is weakly $NP$-hard and $O(n^2\sum w_j)$-time solvable.
For scheduling problems with late work criteria, \cite{potts1992single} proved that the problem of minimising the total late work, $1||Y$, is weakly $NP$-hard and $O(n\sum p_j)$-time solvable. \cite{hariri1995single} studied the weighted problem $1||Y_w$ and constructed a DP with $O(n^2 \sum p_j)$ time complexity. The special cases of both problems with a common due date, $1|d_j=d|Y$ and $1|d_j=d|Y_w$, are polynomially solvable in $O(n)$ and $O(n\log n)$ time by the algorithms proposed by \cite{potts1992single} and \cite{hariri1995single}, respectively.
As we have mentioned late work criteria have not been studied for serial-batching environment. They have been investigated for parallel-batching setting only ($p\text{-}batch$), where the batch completion time is determined by the maximum processing time among jobs assigned to this batch. For this model \cite{zhang2005np} showed that problem $1|p\text{-}batch|Y_w$ is weakly $NP$-hard, while \cite{ren2009np} proved that the problem without weights, i.e., $1|p\text{-}batch|Y$ is also weakly $NP$-hard. Both problems $1|p\text{-}batch|Y_w$ and $1|p\text{-}batch|Y$ can be solved by using the DP algorithm proposed by \cite{brucker1996single}.

The remainder of the paper is organized as follows. Section \ref{Sec: s-batch} provides the formal definition of the general problem, $1|s\text{-}batch|Y_w$, and shows that it is $NP$-hard. Section \ref{Sec: DP} presents a pseudo-polynomial time algorithm to solve $1|s\text{-}batch|Y_w$, which proves its weak $NP$-hardness. The following sections focus on special cases of this model where the computational burden of the dynamic programming can be reduced. Specifically, Section \ref{Sec: Common Due Date} provides a specialised algorithm for problem $1|s\text{-}batch, d_j = d|Y_w$, while Section \ref{Sec: Agreeable} focuses on problem $1|s\text{-}batch,d_j \uparrow p_j\uparrow w_j\downarrow|Y_w$.
For all the proposed algorithms, specialised acceleration techniques are introduced in their respective sections.  Section \ref{Sec: Experiments} presents the results of extensive computational experiments, where the efficiency of the proposed methods is evaluated against the Gurobi optimizer.
Conclusions and future research directions are given in Section \ref{Sec: Conclusion and Future Research}.

\vspace{-1em}
\begin{table}[H]
	\centering
	\caption{The complexity results of related problems}\label{Tab: Overview of the complexity results}
	\scalebox{0.6}{
		\begin{tabular}{cccc}
			\toprule
			\textbf{Classification} & \textbf{Problem}               & \textbf{Complexity}                              & \textbf{Reference}                   \\ \midrule
			                        & $1|s\text{-}batch|\sum C_j$    & $O(n\log n)$                                     & \cite{coffman1990batch}              \\
			                        & $1|s\text{-}batch|\sum w_jC_j$ & strongly $NP$-hard                               & \cite{albers1993complexity}          \\
			{serial-batch}          & $1|s\text{-}batch|\sum U_j$    & $O(n^3)$                                         & \cite{brucker1996single}             \\
			{scheduling}            & $1|s\text{-}batch|\sum w_jU_j$ & weakly $NP$-hard, $O(n^2\sum w_j)$               & \cite{brucker1996single}             \\
			                        & $1|s\text{-}batch|L_{\max}$    & $O(n^2)$                                         & \cite{webster1995scheduling}         \\
			                        & $1|s\text{-}batch|\sum T_j$    & weakly $NP$-hard, $O(n^{11}\max(p_{\max}, s)^7)$ & \cite{du1990minimizing}              \\
			\cline{1-4}
			                        & $1||Y$                         & weakly $NP$-hard, $O(n\sum p_j)$                 & \cite{potts1992single}               \\
			                        & $1|d_j=d|Y$                    & $O(n)$                                           & \cite{potts1992single}               \\
			{late work-related}     & $1||Y_w$                       & weakly $NP$-hard, $O(n^2 \sum p_j)$              & \cite{hariri1995single}              \\
			{scheduling}            & $1|d_j=d|Y_w$                  & $O(n\log n)$                                     & \cite{hariri1995single}              \\
			                        & $1|p\text{-}batch|Y_w$         & weakly $NP$-hard, $O(n^2 \sum p_j)$              & \cite{zhang2005np,brucker1996single} \\
			                        & $1|p\text{-}batch|Y$           & weakly $NP$-hard, $O(n^2 \sum p_j)$              & \cite{ren2009np,brucker1996single}   \\
			\bottomrule
		\end{tabular}}
\end{table}

\section{Problem Description}\label{Sec: s-batch}

Formally, the serial-batching problem with the total weighted late work, $1|s\text{-}batch|Y_w$, is described as follows. We are given a set of jobs $J=\{J_1,J_2,\ldots,J_n\}$. Each job $J_j$ has a processing time $p_j$, a due date $d_j$, and a weight $w_j$. All jobs have to be processed on a serial-batch machine. The jobs in the same batch start and complete simultaneously. The processing time of a batch equals to the sum of the processing times of its jobs. The maximum number of jobs in a batch is unlimited. Before starting a batch, a constant setup time $s$ is needed. Let $C_j(\sigma)$ be the completion time of job $J_j$ under a given schedule $\sigma$. The late work of job $J_j$ is equal to the part of the processing time scheduled after its due date, denoted as $Y_j(\sigma)=\min\{\max\{C_j(\sigma)-d_j,0\},p_j\}$. Let $Y(\sigma)$ and $Y_w(\sigma)$ be the total late work and total weighted late work under schedule $\sigma$, respectively, i.e., $Y(\sigma) = \sum_{j=1}^n Y_j(\sigma)$ and $Y_w(\sigma) = \sum_{j=1}^n w_j Y_j(\sigma)$. When there is no confusion, we omit $\sigma$ in $C_j(\sigma)$, $Y_j(\sigma)$, $Y(\sigma)$ and $Y_w(\sigma)$ using abbreviated forms $C_j$, $Y_j$, $Y$ and $Y_w$, respectively. To improve readability, Table \ref{Tab: notation} presents a summary of basic notations used in the paper.

\begin{table}[htbp]

	\renewcommand{\arraystretch}{1.3}
	\centering
	\caption{Basic notations}
	\label{Tab: notation}
	\resizebox{\textwidth}{!}{%
		\begin{tabular}{p{3.5cm} p{9.5cm}}
			\toprule
			\textbf{Notation}                        & \textbf{Definition}                                                                                                                \\
			\midrule
			$J_j$                                    & the job $J_j$                                                                                                                      \\
			$J$                                      & the job set $J = \{J_1, \ldots, J_n\}$                                                                                             \\
			$p_j$                                    & the processing time of job $J_j$                                                                                                   \\
			$d_j$                                    & the due date of job $J_j$                                                                                                          \\
			$w_j$                                    & the weight of job $J_j$                                                                                                            \\
			$s$-batch                                & the keyword indicating the serial-batch machine environment                                                                        \\
			$d_j = d$                                & all jobs have a common due date $d$                                                                                                \\
			$d_j \uparrow p_j\uparrow w_j\downarrow$ & agreeable condition among due dates, processing times and weights, i.e., if $d_i\geq d_j$, then $p_i\geq p_j$ and $w_i\leq w_j$    \\
			$C_j(\sigma)$ or $C_j$                   & the completion time of job $J_j$ under schedule $\sigma$ (for simplicity, $C_j(\sigma)$ is abbreviated as $C_j$ where appropriate) \\
			$Y_j(\sigma)$ or $Y_j$                   & the late work of job $J_j$ under schedule $\sigma$, defined as $Y_j(\sigma) = \min\{\max\{C_j(\sigma) - d_j, 0\}, p_j\}$           \\
			$Y(\sigma)$ or $Y$                       & the total late work of all jobs under schedule $\sigma$, i.e., $Y(\sigma) = \sum_{j=1}^n Y_j(\sigma)$                              \\
			$Y_w(\sigma)$ or $Y_w$                   & the total weighted late work of all jobs under schedule $\sigma$, i.e., $Y_w(\sigma) = \sum_{j=1}^n w_jY_j(\sigma)$                \\
			\bottomrule
		\end{tabular}
	}
\end{table}

The above given problem $1|s\text{-}batch|Y_w$ is $NP$-hard even when all jobs have a common due date and a unit weight, in contrast to the problems with a single machine without batching, $1||Y_w$ and $1||Y$, which are polynomially solvable for $d_j=d$ \citep[cf.][]{potts1992single,hariri1995single}.

\begin{spacing}{1.48}
	\begin{theorem}\label{Th: s-batch $NP$-hardness}
		Problem $1|s\text{-}batch, d_j=d|Y$ is $NP$-hard.
	\end{theorem}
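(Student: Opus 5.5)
The plan is a polynomial reduction from \textsc{Partition}. Given positive integers $a_1,\dots,a_m$ with $\sum_j a_j=2A$ (we may assume $a_j\le A$ for all $j$, otherwise the instance is trivially a no-instance), we build an instance of $1|s\text{-}batch,d_j=d|Y$ with $m$ jobs and $p_j=a_j$. We take any setup time, say $s=1$, and set the common due date to $d=s+A$. We will show that a schedule with $Y\le A$ exists if and only if some subset of the $a_j$ sums to $A$.

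The core of the argument is a structural lemma about optimal schedules under a common due date.
\begin{itemize}
\item We may assume no idle time.
\item Every batch completing by $d$ contributes zero late work. All such ``early'' batches can be merged into a single batch, since this only removes setups and moves completions earlier.
\item For the remaining jobs, every job completing after $d$ has late work $\min\{C_j-d,p_j\}\le p_j$. Suppose we merge the batch straddling $d$ with all later batches into one batch. Its completion time does not increase: it equals the old last completion time minus the removed setups. Hence every late job's late work is at most $\min\{C-d,p_j\}\le p_j$, and merging never hurts.
\end{itemize}
So it suffices to consider schedules of two types. In type (i) there is one early batch $E$ with $s+P(E)\le d$, followed by one late batch $L=J\setminus E$. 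In type (ii) there is a single batch containing all jobs.

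For type (i) the late batch completes at $C=2s+2A$, so $C-d=s+A\ge a_j$ for every $j$. Hence $Y=\sum_{j\in L}\min\{s+A,p_j\}=P(L)=2A-P(E)$. The constraint $s+P(E)\le d$ reads $P(E)\le A$, so $Y\ge A$, with equality iff $P(E)=A$. For type (ii) we get $C=s+2A$ and $C-d=A$, so $Y=\sum_j\min\{A,a_j\}=2A>A$, using $a_j\le A$ and $A>0$. Therefore the minimum total late work is at most $A$ exactly when \textsc{Partition} has a solution, and the reduction is clearly polynomial.

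The main obstacle is the merging argument in the structural lemma, specifically for late jobs. In the serial-batch model a job's completion is the batch completion, so one must verify carefully that merging batches never raises any individual late work above what it was. For jobs moved from later batches the bound $\min\{C-d,p_j\}\le p_j$ suffices. For jobs of the straddling batch, the new completion time must be compared with the old one, so that $\min\{C-d,p_j\}$ is seen not to grow. Once this is settled, the remaining computation is routine. It relies on the choice $d=s+A$, which forces $C-d\ge p_j$ for every late job, so that late work is linear in the processing times and the problem reduces exactly to subset sum.
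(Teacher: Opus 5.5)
You use the same reduction as the paper ($s=1$, $d=A+1$, threshold $A$), which only asserts the equivalence. Your assumption $a_j\le A$ is genuinely needed, not cosmetic: if $a_1>A$, putting $J_1$ alone in the first batch gives $Y=(a_1-A)+(2A-a_1)=A$ on a no-instance, and the paper's proof omits this.

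The gap is the third bullet of your structural lemma, which is false. Bounding the new late work by $p_j$ only shows that merging does not hurt jobs that were already fully late. That covers jobs in batches after the straddling batch $S$, because such a batch $B'$ completes at time at least $d+s+P(B')$. It does not cover the jobs of $S$ itself. Their completion time grows from $C_S$ to $C_S$ plus the processing time of all later batches, so a partially late job in $S$ can gain late work. Take $s=1$, $d=10$, $p_1=5$, $p_2=100$:
\begin{itemize}
\item the schedule $\{J_2\},\{J_1\}$ has $Y=91+5=96$;
\item merging it into one batch gives $Y=5+96=101$;
\item the only type (i) schedule, $\{J_1\},\{J_2\}$, gives $97$.
\end{itemize}
So the unique optimum is of neither type, and the obstacle you flagged cannot be settled as stated. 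Schedules with a straddling batch followed by late batches, with or without an early batch in front, are simply not covered by your case analysis.

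For the reduction instances you can bypass the structural lemma entirely. For any schedule and any order inside a batch, $Y_j$ is at least the amount of $J_j$'s processing done after $d$. Since $[0,s]$ is a setup, at most $d-s=A$ units of processing precede $d$, so $Y\ge 2A-A=A$ for every schedule. If $Y\le A$, the machine must process without interruption on $[s,d]$. Hence the first batch $B_1$ completes at $C_1\ge d$, and every later batch $B'$ completes at time at least $d+s+P(B')$, so its jobs are fully late.
\begin{itemize}
\item If $C_1=d$, then $P(B_1)=A$ and $B_1$ is a partition half.
\item If $C_1=d+x$ with $x>0$, then $P(B_1)=A+x$ and $Y\ge\sum_{J_j\in B_1}\min\{x,p_j\}+(A-x)$. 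Here the sum exceeds $x$. If all $p_j<x$, it equals $A+x$. If some $p_j\ge x$, either $B_1$ contains another job contributing a positive term, or $B_1=\{J_j\}$ with $p_j=A+x>A$, which $a_j\le A$ excludes. Thus $Y>A$.
\end{itemize}
Together with your type (i) computation for the converse direction, this completes the proof.
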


	\begin{proof}
		We use the $NP$-complete partition problem for the reduction to the decision counterpart of problem $1|s\text{-}batch, d_j=d|Y$.\\
		\noindent\textbf{Partition Problem:} Given a set of $t$ positive integers, $A=\{a_1,a_2,\ldots,a_t\}$ with $\sum_{i=1}^t a_i=2B$, can $A$ be partitioned into two sets $A_1$ and $A_2$ such that $\sum_{a_j\in A_{1}}a_{j}=\sum_{a_j\in A_{2}}a_{j}=B$?\\
		Given an instance $I$ of partition problem, we construct an instance $I'$ of the decision counterpart of problem $1|s\text{-}batch, d_j=d|Y$ as follows:

		\noindent$\bullet$ $n=t$;

		\noindent$\bullet$ $s=1$;

		\noindent$\bullet$ $p_{j}=a_{j}$, $d_j=B+1$, for $j=1,2, \ldots, n$;

		\noindent$\bullet$ The decision problem is whether there is a schedule $\pi$ such that $Y\leq B$.

		We can easily see that the decision counterpart of problem $1|s\text{-}batch, d_j=d|Y$ has a solution if and only if partition problem has a solution. Thus, the theorem holds.
	\end{proof}
\end{spacing}

Since problem $1|s\text{-}batch, d_j=d|Y$ is $NP$-hard, the more general cases with the objective of minimising the total weighted late work, $1|s\text{-}batch, d_j=d|Y_w$, with various due dates, $1|s\text{-}batch|Y$, as well as the weighted problem with various due dates, $1|s\text{-}batch|Y_w$, are also $NP$-hard. In the following section, we propose a pseudo-polynomial time algorithm solving $1|s\text{-}batch|Y_w$, which can be obviously applied for the considered special cases with a common due date, $1|s\text{-}batch, d_j=d|Y_w$,  and with an agreeable condition among due dates, processing times and weights, $1|s\text{-}batch,d_j\uparrow p_j\uparrow w_j\downarrow|Y_w$, and prove their weak $NP$-hardness.

\section{Problem \texorpdfstring{$1|s\text{-}batch|Y_w$}{1|s-batch|Yw}}\label{Sec: DP}

This section addresses the general problem $1|s\text{-}batch|Y_w$. We propose a pseudo-polynomial time algorithm and then provide an acceleration framework to enhance computational performance of this method.

\subsection{Dynamic Programming Algorithm}

The dynamic programming algorithm solving problem $1|s\text{-}batch|Y_w$ is based on a property of optimal schedules showed in Lemma \ref{Lem: s-batch general}. In the following of this paper, we call job $J_j$ $non$-$late$ if $C_j \leq d_j + p_j$, and $late$ if $C_j > d_j + p_j$. Accordingly, we call a batch $non$-$late$ $batch$ if all jobs in this batch are non-late, and $late$ $batch$ if all jobs in this batch are late.

\begin{lemma}\label{Lem: s-batch general}
	For problem $1|s\text{-}batch|Y_w$, there exists an optimal schedule satisfying conditions:\\
	(1) The late jobs are scheduled in a batch following the non-late jobs.\\
	(2) For any two non-late jobs $J_i$ and $J_j$ with $d_i\leq d_j$, if $J_j$ is processed in batch $B_k$, then $J_i$ is scheduled in batch $B_{q}$, where $q\leq k+1$.
\end{lemma}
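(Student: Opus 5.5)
We prove Lemma \ref{Lem: s-batch general} by an exchange argument. Start from an arbitrary optimal schedule $\sigma$ and transform it in a finite number of steps, never increasing $Y_w$, into a schedule that satisfies (1) and (2). We use a potential function to guarantee termination, for instance the lexicographic pair (number of batches containing late jobs, number of inversions among non-late jobs). The key observation is that a late job $J_j$, i.e. one with $C_j>d_j+p_j$, already has $Y_j=p_j$, which is the maximum possible value. Delaying it further therefore costs nothing, while advancing or removing it can only help the jobs that remain before it.

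For condition (1), take all late jobs of $\sigma$, remove them from their batches, and put them into one new batch placed at the very end of the schedule. Every remaining batch keeps or reduces its size, so its completion time does not increase. Batches that become empty disappear, and this also saves their setup times. Hence every non-late job keeps $C_j$ unchanged or decreased, and its $Y_j$ does not increase. The moved jobs still contribute exactly $p_j$ each, because $Y_j\le p_j$ always and their new completion time is at least their old one. One subtlety: a moved job could now have $Y_j<p_j$, which only improves the objective. To keep the labels consistent, we simply call the final batch the "late batch". If it contains jobs that are now actually non-late, the objective is still no worse, and we can re-examine the schedule. Alternatively, we define the classification relative to the final schedule and iterate; since the objective never rises and the schedule set is finite, the process stabilises. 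After this step, all late jobs sit in a single last batch and the non-late jobs occupy the batches before it.

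For condition (2), suppose there are non-late jobs $J_i,J_j$ with $d_i\le d_j$, $J_j\in B_k$, and $J_i\in B_q$ with $q\ge k+2$. Move $J_i$ out of $B_q$ into batch $B_{k+1}$. Batches $B_{k+1},\dots,B_{q-1}$ now complete $p_i$ later. Batch $B_q$ and all later batches keep their completion times, or get earlier ones if $B_q$ becomes empty. $J_i$ itself now completes at $C'_i\le C_q$, so its late work does not increase. The main obstacle is the jobs in $B_{k+1},\dots,B_{q-1}$, whose completion times grow by $p_i$.

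To handle this, we use a variant of the exchange. Rather than inserting $J_i$ alone, we swap roles: move $J_i$ into $B_{k+1}$ and, where necessary, move $J_j$ (or the jobs it dominates) to the position formerly occupied by $J_i$. Then we compare the two sums using $d_i\le d_j$ together with the fact that both jobs are non-late, so that $Y=\max(C-d,0)$ is linear in $C$ beyond $d$. I would try the pure swap first: exchange $J_i\in B_q$ and $J_j\in B_k$. Completion times of intermediate batches change by $p_i-p_j$, and the jobs $J_i,J_j$ exchange completion times. If $p_i\le p_j$, the intermediate batches do not get later, and the two jobs' combined cost satisfies $w_i\max(C_k-d_i,0)+w_j\max(C_q-d_j,0)\le\ldots$ by an earliest-due-date-type inequality.

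The case $p_i>p_j$ with unequal weights is the genuine difficulty. For that case I would fall back on a choice of optimal schedule: among optimal schedules, take one that minimises $\sum_j C_j$ or the inversion count. Then I would argue that if some non-late job in $B_{k+1},\dots,B_{q-1}$ became late or costlier, there would be a contradiction with the slack condition $C\le d+p$. The details of this last step are where I expect the proof to require the most care.
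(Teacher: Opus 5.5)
Your argument for condition (1) is fine and is essentially the paper's: removing late jobs never delays a remaining job, and a moved job contributes at most $w_jp_j$, which it already contributed. Condition (2), however, is not proved. Your three attempts each fail:
\begin{itemize}
\item Inserting $J_i$ into $B_{k+1}$ delays every job of $B_{k+1},\dots,B_{q-1}$ by $p_i$, and, as you note, this cannot be controlled.
\item The swap fails even when $p_i\le p_j$. In $\sigma$ the job $J_j$ is early, so the pair costs $w_i\max\{C_q-d_i,0\}$ before the swap and $w_j\max\{C_q-d_j,0\}$ after it. With $d_i=d_j$ and $w_j>w_i$ this strictly increases, so no EDD-type inequality is available once weights are arbitrary.
\item The fallback of choosing an optimal schedule that minimises $\sum C_j$ or the number of inversions does not name any modification that produces a contradiction, so it is not yet an argument.
\end{itemize}

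The missing idea is to move the other job, in the other direction: take $J_j$ out of $B_k$ and put it into $B_{q-1}$, the batch immediately preceding the batch of $J_i$. This is safe precisely because $J_i$ is non-late. Since $B_q$ contains $J_i$ and starts with a setup, $C_{q-1}\le C_q-s-p_i$. Since $J_i$ is non-late, $C_q=C_i\le d_i+p_i$. Hence $C_{q-1}\le d_i-s\le d_j$, so $J_j$ is on time in its new position; it was on time before as well, since $C_k\le C_{q-1}$.

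Removing $J_j$ makes $B_k,\dots,B_{q-2}$ finish $p_j$ earlier, and adding it to $B_{q-1}$ merely restores that batch's completion time. If $B_k$ becomes empty, every later batch finishes at least $s$ earlier. Thus no completion time other than $C_j$ increases, no late work increases, and no non-late job becomes late. This is exactly the paper's exchange.

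For termination, use a lexicographic potential. The number of non-late jobs never decreases. Once (1) has been restored, each such move either deletes a batch or keeps the number of batches and pushes one job into a strictly later batch. Your inversion count does not work here, because it can increase under this move: a job $J_m$ left behind in $B_k$ with $d_m>d_j$ becomes inverted with $J_j$.
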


\begin{proof}
	Since we can always shift the late jobs after the non-late jobs without increasing the objective value, Condition (1) holds.

	With regard to Condition (2), we prove it by using interchange approach. Assume that $\sigma$ is an optimal schedule violating Condition (2), i.e., as showed in Figure \ref{Fig: Fig_lemma_1.1} there exists two jobs $J_i$ and $J_j$ with  $d_i \leq d_j$ and $J_j \in B_k$ but $J_i \in B_{k+l}$, where $l \geq 2$.

	\begin{figure}[!htb]
		\centering
		\includegraphics[scale=0.8]{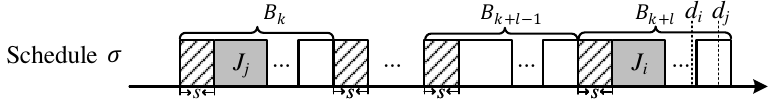}
		\caption{Optimal schedule $\sigma$ violating Condition (2)}
		\label{Fig: Fig_lemma_1.1}
	\end{figure}

	Let $\sigma'$ be a schedule after shifting job $J_j$ to batch $B_{k+l-1}$, as showed in Figure \ref{Fig: Fig_lemma_1.2}.	Since in schedule $\sigma'$ job $J_j$ is still early, i.e., $C_j(\sigma') \leq d_j$, and the completion times of other jobs do not increase, schedule $\sigma'$ is not worse than schedule $\sigma$.

	\begin{figure}[!htb]
		\centering
		\includegraphics[scale=0.8]{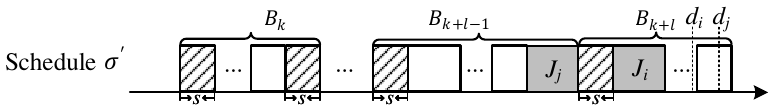}
		\caption{Optimal schedule $\sigma'$ after shifting job $J_j$ to batch $B_{k+l-1}$}
		\label{Fig: Fig_lemma_1.2}
	\end{figure}
	\vspace{-1em}
	By repeating the above argument, a schedule satisfying Condition (2) can be obtained.
\end{proof}

\begin{sloppypar}
	Based on Lemma \ref{Lem: s-batch general}, we design a DP algorithm (named $Algorithm$ $DP\text{-}F$) to solve problem $1|s\text{-}batch|Y_w$. The recursive function is defined as $f_j^b(L_1, t_1, \underline{d}, \bar{d}, L_2, t_2, k)$, the value of which is equal to the minimum total weighted late work for jobs $J^{(j)} = \{J_j, \ldots, J_n\}$ assuming that all jobs are numbered in earliest due date (EDD) order and non-late jobs from $J^{(1)}$ are processed in exactly $b$ batches. Among these batches, the first and the second non-late batches with regard to their completion times are denoted as $B^{(j)}_1$ and $B^{(j)}_2$, respectively. The completion times of these batches are equal to $t_1$ and $t_2$, while the total processing time of jobs from $J^{(j)}$ processed in them are equal to $L_1$ and $L_2$, respectively. The minimum and maximum due dates among the jobs scheduled in batch $B_1^{(j)}$ are $\underline{d}$ and $\bar{d}$, respectively. We call $J_x \in B_2^{(j)}$ $non$-$disturbed$ if $\bar{d} < d_x$, $semi$-$disturbed$ if $\bar{d} = d_x$, and $disturbed$ if $\bar{d} > d_x$. If $k=0$ then all jobs in $B_2^{(j)}$ are non-disturbed jobs. If $k=1$ then there exists at least one semi-disturbed job and does not exist any disturbed job in $B_2^{(j)}$. Otherwise, if $k=2$ then there exists at least one disturbed job in $B_2^{(j)}$. Note that batches $B^{(j)}_1$ and $B^{(j)}_2$ do not need to be executed consecutively, i.e., they may be separated by a batch consisting of jobs from $\{J_1, \ldots, J_{j-1}\}$. Table \ref{Tab: notation for Algorithm DP-F} presents a summary of the notations used in $Algorithm$ $DP\text{-}F$.
\end{sloppypar}

\begin{table}[!htbp]

	\renewcommand{\arraystretch}{1.4}
	\centering
	\caption{Notations for $Algorithm$ $DP\text{-}F$}
	\label{Tab: notation for Algorithm DP-F}
	\resizebox{\textwidth}{!}{
		\begin{tabular}{p{3.5cm} p{9.5cm}}
			\toprule
			\textbf{Notation}                                      & \textbf{Definition}                                                                                                                                                                                                                                     \\
			\midrule
			$J^{(j)}$                                              & the set $J^{(j)} = \{J_j, \ldots, J_n\}$                                                                                                                                                                                                                \\
			$b$                                                    & the number of batches used to schedule all non-late jobs from set $J$, i.e., $J^{(1)}$.                                                                                                                                                                 \\
			$B_i$                                                  & the $i$-th batch (in terms of completion time) among all batches                                                                                                                                                                                        \\
			$B^{(j)}_1$                                            & the first batch (in terms of completion time) among all non-late batches for jobs in $J^{(j)}$                                                                                                                                                          \\
			$B^{(j)}_2$                                            & the second batch (in terms of completion time) among all non-late batches for jobs in $J^{(j)}$                                                                                                                                                         \\
			$t_1$                                                  & the completion time of batch $B^{(j)}_1$                                                                                                                                                                                                                \\
			$t_2$                                                  & the completion time of batch $B^{(j)}_2$                                                                                                                                                                                                                \\
			$L_1$                                                  & the total processing time of jobs from $J^{(j)}$ processed in $B^{(j)}_1$                                                                                                                                                                               \\
			$L_2$                                                  & the total processing time of jobs from $J^{(j)}$ processed in $B^{(j)}_2$                                                                                                                                                                               \\
			$\underline{d}$                                        & the minimum due date among jobs in batch $B_1^{(j)}$                                                                                                                                                                                                    \\
			$\bar{d}$                                              & the maximum due date among jobs in batch $B_1^{(j)}$                                                                                                                                                                                                    \\
			$k$                                                    & an indicator variable where $k=0$ if all jobs in $B_2^{(j)}$ are non-disturbed; $k=1$ if there is at least one semi-disturbed job and no disturbed job; $k=2$ if there is at least one disturbed job                                                    \\
			$f_j^b(L_1, t_1, \underline{d}, \bar{d}, L_2, t_2, k)$ & the recursive function that returns the minimum total weighted late work for jobs $J^{(j)} = \{J_j, \ldots, J_n\}$ under the condition that exactly $b$ non-late batches exist and state $(L_1, t_1, \underline{d}, \bar{d}, L_2, t_2, k)$ is satisfied \\
			$f_j^0$                                                & the simplified notation for the recursive function $f_j^b(L_1, t_1, \underline{d}, \bar{d}, L_2, t_2, k)$ in the case of $b = 0$                                                                                                                        \\
			$f_j^1(L_1, t_1, \underline{d}, \bar{d})$              & the simplified notation for the recursive function $f_j^b(L_1, t_1, \underline{d}, \bar{d}, L_2, t_2, k)$ in the case of $b = 1$                                                                                                                        \\
			$p^{(j)}_{\min}$                                       & the minimum  processing time of the jobs in $J^{(j)}$, i.e., $p^{(j)}_{\min} = \min_{k = j,\ldots,n}\{p_k\}$                                                                                                                                            \\
			$P^{(j)}$                                              & the total processing time of the jobs in $J^{(j)}$, i.e., $P^{(j)} = \sum_{k=j}^n p_k$                                                                                                                                                                  \\
			$P^{(j,i)}$                                            & the sum of the first $i$ shortest processing times in $J^{(j)}$, i.e., $P^{(j,i)} = \sum_{k = j}^{j + i - 1}p'_k$ with $p'_j,\ldots,p_n'$ corresponding to non-decreasing order of $p_j,\ldots,p_n$                                                     \\
			\bottomrule
		\end{tabular}
	}
\end{table}

The $Algorithm$ $DP\text{-}F$ decides for each job $J_j$, $j=n, \ldots, 1$, whether it should be processed as a non-late or late job. The non-late jobs are scheduled backwards.
Taking into account Lemma \ref{Lem: s-batch general}, since a disturbed job can be scheduled only in the following batch, it is sufficient to consider the first two successive batches with non-late jobs while constructing a partial schedule. All late jobs are executed in a late batch at the end of the schedule. Thus, depending on the parameter values, job $J_j$ can be scheduled in five possible ways:
\begin{itemize}
	\item late;
	\item non-late and processed in $B^{(j)}_1$:
	      \begin{itemize}
		      \item as the only job from $J^{(j)}$ (then $L_1=p_j$):
		            \begin{itemize}
			            \item [$>$] causing at least one job to be a semi-disturbed job in $B_2^{(j)}$, which was  $B_1^{(j+1)}$ in the previous DP iteration performed for $J^{(j+1)}$;
			            \item [$>$] causing all job in $B_2^{(j)}$, i.e., $B_1^{(j+1)}$, to be non-disturbed jobs;
		            \end{itemize}
		      \item with at least one more job from $J^{(j)}$ (then $L_1\geq p_j + p_{\min}^{(j+1)}$):
		            \begin{itemize}
			            \item [$>$] not affecting the jobs in $B_2^{(j)}$, i.e., $B_2^{(j+1)}$, if $J_x \in B_2^{(j+1)}$ is disturbed (semi-disturbed/non-disturbed) then it is still disturbed (semi-disturbed/non-disturbed) in $B_2^{(j)}$;
		            \end{itemize}
	      \end{itemize}
	\item non-late and processed in $B^{(j)}_2$:
	      \begin{itemize}
		      \item as the only job from $J^{(j)}$ (then $L_2 = p_j$):
		            \begin{itemize}
			            \item [$>$] as a semi-disturbed job;
			            \item [$>$] as a disturbed job;
		            \end{itemize}
		      \item with at least one more job from $J^{(j)}$ (then $L_2 \geq p_j + p_{\min}^{(j+1)}$):
		            \begin{itemize}
			            \item [$>$] as a semi-disturbed job;
			            \item [$>$] as a disturbed job.
		            \end{itemize}
	      \end{itemize}
\end{itemize}

In the iterative process, while scheduling job $J_j$ we distinguish 4 possible cases, depending on the number of batches containing non-late jobs from $J^{(j)}$ in a partial schedule: $b=0$, $b=1$, $b=2$ and $3 \leq b \leq j$. Considering that for $b = 0$ there does not exist non-late batch, and for $b = 1$ there is only one non-late batch in the schedule, we simplify the parameters in the recursive function $f_j^b(L_1, t_1, \underline{d}, \bar{d}, L_2, t_2, k)$ as follows:
\begin{itemize}
	\item For $b = 0$, the recursive function is simplified to $f_j^0$ since the parameters related to batches $B^{(j)}_1$ and $B^{(j)}_2$ can be omitted in the recursive function.
	\item For $b = 1$, the recursive function is simplified to $f_j^1(L_1, t_1, \underline{d}, \bar{d})$ since there is only one non-late batch in the schedule, the parameters related to batch $B^{(j)}_2$ can be omitted in the recursive function.
	\item For $b = 2$ and $3 \leq b \leq j$, the recursive function remains $f_j^b(L_1, t_1, \underline{d}, \bar{d}, L_2, t_2, k)$.
\end{itemize}
Remark: The computation of $f_j^b(L_1, t_1, \underline{d}, \bar{d}, L_2, t_2, k)$ with $2\leq b\leq j$ may call the values of $f_j^1(L_1, t_1, \underline{d}, \bar{d})$, and the computation of $f_j^1(L_1, t_1, \underline{d}, \bar{d})$ may call the values of $f_j^0$.\\

\noindent Case 1: $b=0$ \\
For $j \in \{n,\ldots,1\}$\\
(\textit{There is no batch with non-late jobs from $J^{(j)}$, so all these jobs must be scheduled late, no job can be a disturbed job, and batches $B_1^{(j)}$ and $B_2^{(j)}$ do not exist.})
\begin{adjustwidth}{15pt}{0pt}
	$f_j^0 = f_{j+1}^0 + w_jp_j$\\
\end{adjustwidth}

\noindent Case 2: $b=1$ \\
\noindent For $j \in \{n,\ldots,1\}~\wedge~p^{(j)}_{\min} \leq L_1 \leq P^{(j)} ~\wedge~L_1 + s \leq t_1\leq ns+P^{(1)} ~\wedge~\underline{d}, \bar{d} \in \{d_j, \ldots, d_n\}$, where $\wedge$ denotes logical ``and'' used in conditional expressions\\
(\textit{There is only one batch with non-late jobs from $J^{(j)}$, so no job can be a disturbed job, this batch must contain at least one job.})\\

\begin{figure}[!htb]
	\centering
	\includegraphics[scale=0.6]{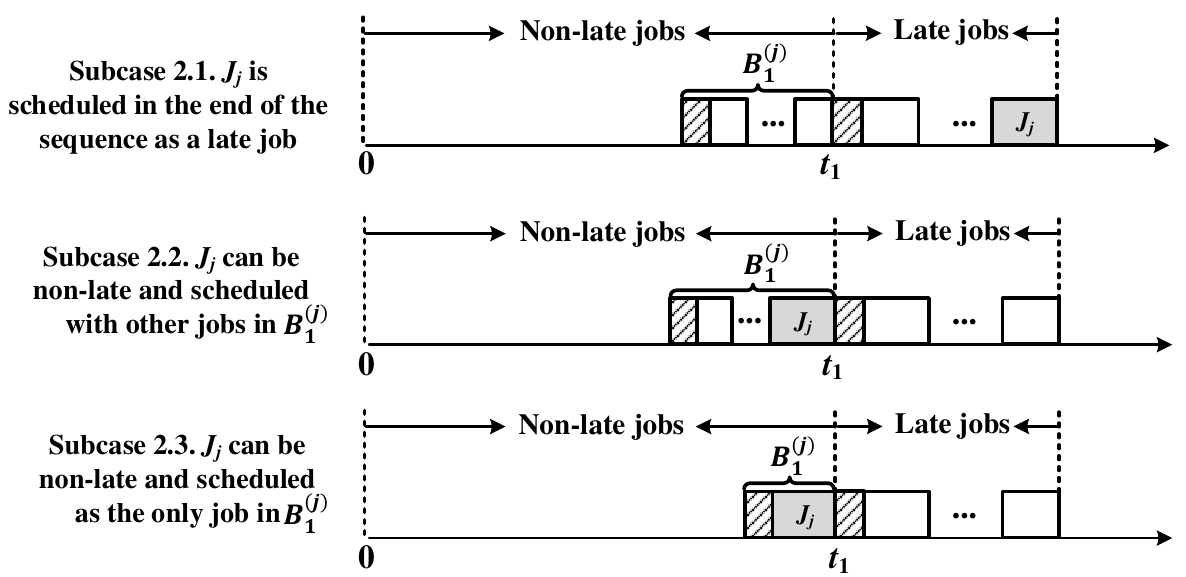}
	\caption{Subcases of Case 2}
	\label{Fig: Fig_DP_b=1}
\end{figure}

\begin{adjustwidth}{15pt}{0pt}

	\noindent Subcase 2.1. If $p_{\min}^{(j)} < L_1 <p_j ~\vee~p_j <L_1 < p_j + p_{\min}^{(j+1)}~\vee~t_1 \geq d_j + p_j$, where $\vee$ denotes logical ``or'' used in conditional expressions\\
	(\textit{$J_j$ must be scheduled as a late job. Conditions $p_{\min}^{(j)} < L_1 <p_j$ and $p_j <L_1 < p_j + p_{\min}^{(j+1)}$ imply that $J_j$ cannot be scheduled with other jobs from $J^{(j)}$ in $B_1^{(j)}$ or scheduled as the only job from this set in batch $B_1^{(j)}$. Condition $t_1 \geq d_j + p_j$ implies that $J_j$ cannot be scheduled as a non-late job.})\\
	$f_j^1(L_1, t_1, \underline{d},\bar{d}) = f_{j+1}^1(L_1, t_1, \underline{d},\bar{d}) + w_j p_j$\\

	\noindent Subcase 2.2. If $L_1 \geq p_j + p_{\min}^{(j+1)} ~\wedge~ t_1 < d_j + p_j ~\wedge~\underline{d} = d_j$\\
	(\textit{$J_j$ can be non-late and scheduled with other jobs from $J^{(j)}$ in $B^{(j)}_1$.})\\
	$f_j^1(L_1, t_1, \underline{d}, \bar{d}) = \min \{f_{j+1}^1(L_1, t_1, \underline{d}, \bar{d}) + w_jp_j,$\\
	\indent $f_{j+1}^1(L_1 - p_j, t_1, \underline{d}', \bar{d}) + w_j\max \{t_1 - d_j, 0 \}~\text{for}~\underline{d}' \in \{d_{j+1}, \ldots, d_n \}$\\
	\indent $\text{and}~\underline{d}'\leq \bar d \}$\\

	\noindent Subcase 2.3. If $L_1 = p_j ~\wedge~t_1 < d_j  + p_j ~\wedge~\underline{d} = \bar{d} = d_j$\\
	(\textit{$J_j$ can be non-late and scheduled as the only job from $J^{(j)}$ in $B^{(j)}_1$.})\\
	$f_j^1(L_1, t_1, \underline{d}, \bar{d}) = \min \{f_{j+1}^1(L_1, t_1, \underline{d}, \bar{d}) + w_j p_j,$\\
	\indent $f_{j+1}^0 + w_j \max\{t_1 - d_j,0 \}\}$\\

	\noindent Otherwise, if none of the conditions in Subcases 2.1-2.3 are satisfied, then $f_j^1(L_1, t_1, \underline{d}, \bar{d}) = \infty$\\
\end{adjustwidth}

\noindent Case 3: $b=2$ \\
\noindent For $j\in \{n - 1,\ldots,1\}~\wedge~p^{(j)}_{\min} \leq L_1 \leq P^{(j)}-p^{(j)}_{\min} ~\wedge~L_1 + s\leq t_1\leq (n-1)s + P^{(1)} - p_{\min}^{(j)} ~\wedge~\underline{d},\bar{d} \in \{d_j, \ldots, d_n\}~\wedge~p_{\min}^{(j)} \leq L_2 \leq  P^{(j)} - L_1 ~\wedge~t_1 + s + L_2 \leq t_2 \leq ns + P^{(1)}~\wedge~k\in\{0,1,2\}$\\
(\textit{There are two batches with non-late jobs from $J^{(j)}$, so all ways of scheduling $J_j$ are possible.})\\

\begin{figure}[!htb]
	\centering
	\includegraphics[scale=0.55]{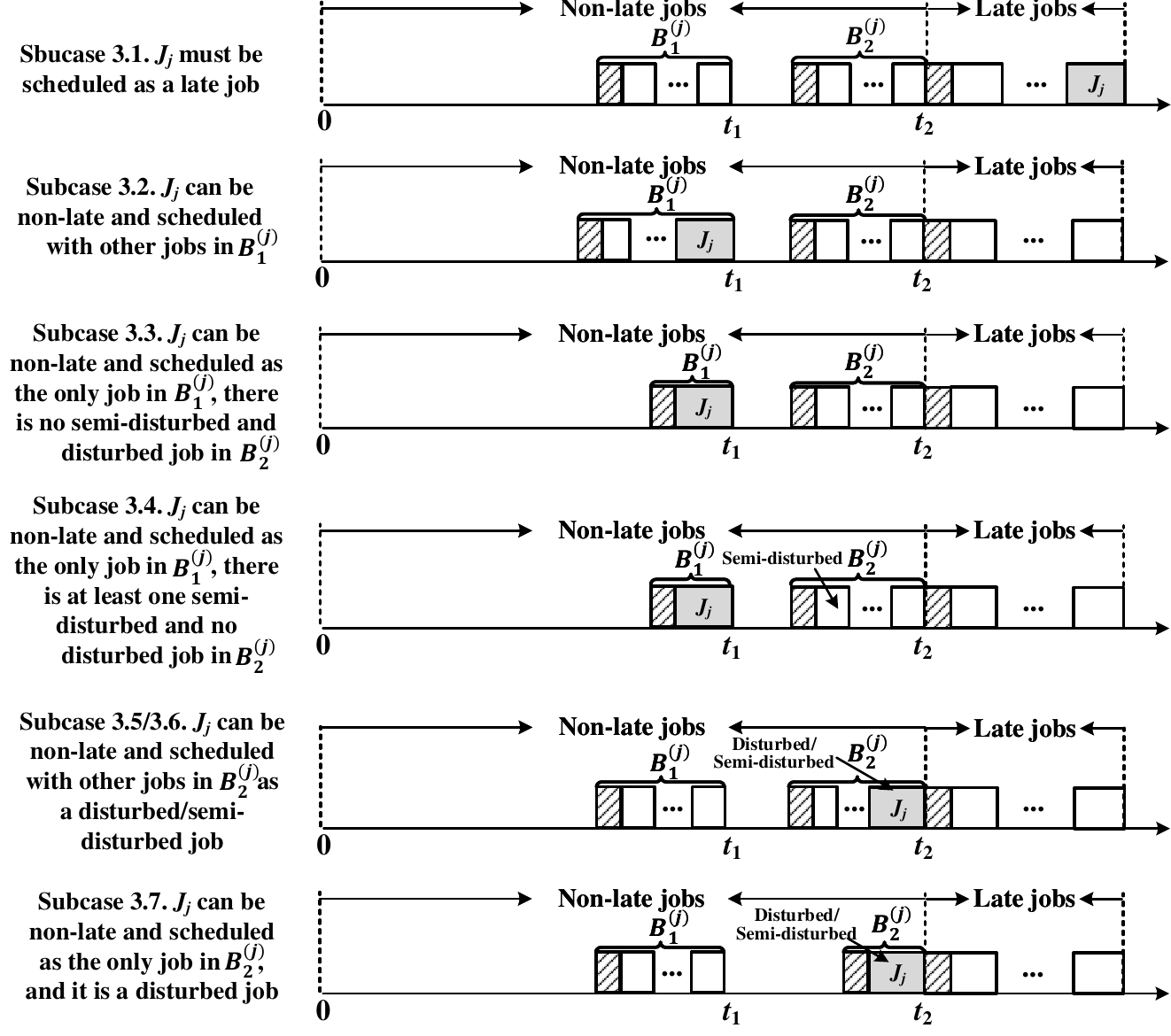}
	\caption{Subcases of Case 3}
	\label{Fig: Fig_DP_b=2}
\end{figure}

\begin{adjustwidth}{15pt}{0pt}

	\noindent Subcase 3.1. If $(p_{\min}^{(j)} < L_1 <p_j ~\vee~ p_j < L_1 < p_j + p_{\min}^{(j + 1)}~\vee~t_1 \geq d_j + p_j)~\wedge~(p_{\min}^{(j)} < L_2 <p_j ~\vee~ p_j < L_2 < p_j + p_{\min}^{(j + 1)}~\vee~t_2 \geq d_j + p_j)$\\
	(\textit{$J_j$ must be scheduled as a late job. Conditions $p_{\min}^{(j)} < L_1 <p_j$, $p_j < L_1 < p_j + p_{\min}^{(j + 1)}$, $p_{\min}^{(j)} < L_2 <p_j$, and $p_j < L_2 < p_j + p_{\min}^{(j + 1)}$ imply that $J_j$ cannot be scheduled with other jobs from $J^{(j)}$ or scheduled as the only job from this set in $B_1^{(j)}$ and $B_2^{(j)}$. Conditions $t_1 \geq d_j + p_j$ and $t_2 \geq d_j + p_j$ imply that $J_j$ cannot be scheduled as a non-late job in batches $B_1^{(j)}$ and $B_2^{(j)}$.})\\
	$f_j^2(L_1, t_1, \underline{d}, \bar{d}, L_2, t_2, k) =$
	$f_{j+1}^2(L_1, t_1, \underline{d}, \bar{d}, L_2, t_2, k) + w_j p_j$\\

	\noindent Subcase 3.2. If $L_1\geq p_j + p_{\min}^{(j+1)} ~\wedge~t_1 < d_j +p_j ~\wedge~\underline{d} = d_j$\\
	(\textit{$J_j$ can be non-late and scheduled with other jobs from $J^{(j)}$ in $B^{(j)}_1$}.)\\
	$f_j^2(L_1, t_1, \underline{d},\bar{d}, L_2, t_2, k) = \min \{f_{j+1}^2(L_1, t_1, \underline{d}, \bar{d}, L_2, t_2, k) + w_j p_j,$\\
	\indent $f_{j+1}^2(L_1 - p_j, t_1, \underline{d}', \bar{d}, L_2, t_2, k) + w_j \max\{t_1 - d_j, 0\}~\text{for}~\underline{d}' \in \{d_{j+1}, \ldots, d_n \}$\\
	\indent $\text{and}~\underline{d}'\leq \bar d \}$\\

	\noindent Subcase 3.3. If $L_1 = p_j ~\wedge~t_1 < d_j + p_j ~\wedge~\underline{d} = \bar d = d_j ~\wedge~ k = 0$\\
	(\textit{$J_j$ can be non-late and scheduled as the only job from $J^{(j)}$ in $B^{(j)}_1$, and there is no semi-disturbed and disturbed job in $B_2^{(j)}$.})\\
	$f_j^2(L_1, t_1, \underline{d}, \bar{d}, L_2, t_2, k) = \min \{f_{j+1}^2(L_1, t_1, \underline{d}, \bar{d}, L_2, t_2, k) + w_j p_j,$\\
	\indent $f_{j+1}^1(L_2, t_2, \underline{d}', \bar{d}') + w_j \max\{t_1 - d_j, 0\}~\text{for}~\underline{d}',\bar{d}' \in \{d_{j+1}, \ldots, d_n\}$\\
	\indent $\text{and}~\underline{d}'>d_j\}$\\

	\noindent Subcase 3.4. If $L_1 = p_j ~\wedge~t_1 < d_j + p_j ~\wedge~\underline{d} = \bar d = d_j ~\wedge~ k = 1$\\
	(\textit{$J_j$ can be non-late and scheduled as the only job from $J^{(j)}$ in $B^{(j)}_1$, and there is at least one semi-disturbed job and no disturbed job in $B_2^{(j)}$.})\\
	$f_j^2(L_1, t_1, \underline{d}, \bar{d}, L_2, t_2, k) = \min \{f_{j+1}^2(L_1, t_1, \underline{d}, \bar{d}, L_2, t_2, k) + w_j p_j,$\\
	\indent $f_{j+1}^1(L_2, t_2, \underline{d}', \bar{d}') + w_j \max\{t_1 - d_j, 0\}~\text{for}~\bar{d}' \in \{d_{j+1}, \ldots, d_n\}$\\
	\indent $\text{and}~ \underline{d}' = d_j \}$\\

	\noindent Subcase 3.5. If $L_2 \geq p_j + p_{\min}^{(j+1)} ~\wedge~t_2 < d_j +p_j ~\wedge~\bar d > d_j~\wedge~k = 2$\\
	(\textit{$J_j$ can be non-late and scheduled with other jobs from $J^{(j)}$ in $B^{(j)}_2$, and it is a disturbed job, i.e., $\bar d > d_j$.})\\
	$f_j^2(L_1, t_1, \underline{d}, \bar{d}, L_2, t_2, k) = \min \{f_{j+1}^2(L_1, t_1, \underline{d}, \bar{d}, L_2, t_2, k) + w_j p_j,$\\
	\indent $f_{j+1}^2(L_1, t_1, \underline{d}, \bar{d}, L_2 - p_j, t_2, k') + w_j \max\{t_2 - d_j, 0\}~\text{for}~k' = 0, 1, 2\}$\\

	\noindent Subcase 3.6. If $L_2 \geq p_j + p_{\min}^{(j+1)} ~\wedge~ t_2 < d_j + p_j ~\wedge~\bar d = d_j ~\wedge~ k = 1$\\
	(\textit{$J_j$ can be non-late and scheduled with other jobs from $J^{(j)}$ in $B^{(j)}_2$ as a semi-disturbed job, and there is no disturbed job in $B_2^{(j)}$.})\\
	$f_j^2(L_1, t_1, \underline{d}, \bar{d}, L_2, t_2, k) = \min \{f_{j+1}^2(L_1, t_1, \underline{d}, \bar{d}, L_2, t_2, k) + w_j p_j,$\\
	\indent $f_{j+1}^2(L_1, t_1, \underline{d}, \bar{d}, L_2 - p_j, t_2, k') + w_j \max\{t_2 - d_j, 0\}~\text{for}~k' = 0, 1\}$\\

	\noindent Subcase 3.7. If $L_2 = p_j ~\wedge~t_2 < d_j + p_j ~\wedge~((k =  2 ~\wedge~\bar d > d_j)~\vee~(k = 1~\wedge~ \bar d = d_j))$\\
	(\textit{$J_j$ can be non-late and scheduled as the only job from $J^{(j)}$ in $B^{(j)}_2$, and it is a disturbed job for $\bar d > d_j$, and is a semi-disturbed job for $\bar d = d_j$.})\\
	$f_j^2(L_1, t_1, \underline{d}, \bar{d}, L_2, t_2, k) = \min \{f_{j+1}^2(L_1, t_1, \underline{d}, \bar{d}, L_2, t_2, k) + w_j p_j,$\\
	\indent $f_{j+1}^1(L_1, t_1, \underline{d}, \bar{d}) + w_j \max\{t_2 - d_j,0 \}\}$\\


	\noindent Otherwise, if none of the conditions in Subcases 3.1-3.7 are satisfied, then $f_j^2(L_1, t_1, \underline{d}, \bar{d}, L_2, t_2, k) = \infty$\\

\end{adjustwidth}

\noindent Case 4: $3 \leq b \leq n-j+1$ \\
Let $p'_j, \ldots, p'_n$ be the non-decreasing order of $p_j, \ldots, p_n$. Let $P^{(j,i)} = \sum_{k = j}^{j + i - 1}p'_k$ denote the sum of first $i$ shortest processing times among jobs $J_j, \ldots, J_n$, where $1 \leq i \leq n - j + 1$.

\noindent For $j \in \{n-2,\ldots,1\}~\wedge~p^{(j)}_{\min} \leq L_1 \leq P^{(j)} - P^{(j, b - 1)} ~\wedge~L_1 + s \leq t_1 \leq (n - b + 1)s+P^{(1)}-P^{(j, b-1)}~\wedge~\underline{d}, \bar{d} \in \{d_j, \ldots, d_n\}~\wedge~p^{(j)}_{\min} \leq L_2 \leq P^{(j)} - L_1 - P^{(j, b - 2)} ~\wedge~ t_1 + s + L_2 \leq t_2 \leq (n - b + 2)s + P^{(1)} - P^{(j, b - 2)}$\\
(\textit{There are more than two batches with non-late jobs from $J^{(j)}$ which must contain at least one job, all ways of scheduling $J_j$ in first two batches are possible.})\\

\begin{figure}[!htb]
	\centering
	\includegraphics[scale=0.5]{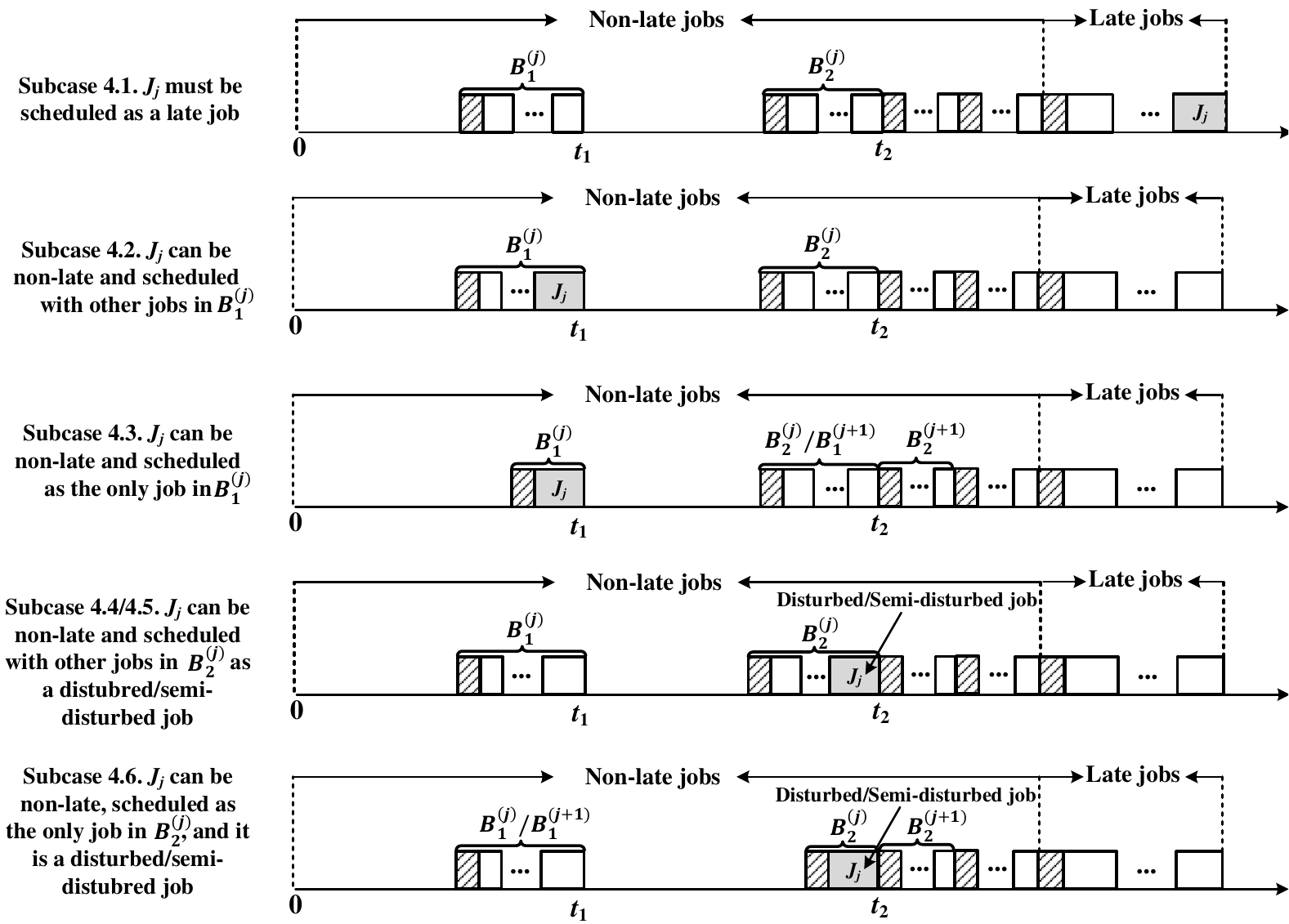}
	\caption{Subcases of Case 4}
	\label{Fig: Fig_DP_b=3}
\end{figure}

\begin{adjustwidth}{15pt}{0pt}
	\noindent Subcase 4.1. If $(p_{\min}^{(j)} < L_1 <p_j ~\vee~ p_j < L_1 < p_j + p_{\min}^{(j + 1)}~\vee~t_1 \geq d_j + p_j)~\wedge~(p_{\min}^{(j)} < L_2 <p_j ~\vee~ p_j < L_2 < p_j + p_{\min}^{(j + 1)}~\vee~t_2 \geq d_j + p_j)$\\
	(\textit{$J_j$ must be scheduled as a late job. Conditions $p_{\min}^{(j)} < L_1 <p_j$, $p_j < L_1 < p_j + p_{\min}^{(j + 1)}$, $p_{\min}^{(j)} < L_2 <p_j$, and $p_j < L_2 < p_j + p_{\min}^{(j + 1)}$ imply that $J_j$ cannot be scheduled with other jobs or scheduled as the only job from $J^{(j)}$ in $B_1^{(j)}$ and $B_2^{(j)}$. Conditions $t_1 \geq d_j + p_j$ and $t_2 \geq d_j + p_j$ imply that $J_j$ cannot be scheduled as a non-late job in batches $B_1^{(j)}$ and $B_2^{(j)}$.})\\
	$f_j^b(L_1, t_1, \underline{d}, \bar{d}, L_2, t_2, k) = f_{j+1}^b(L_1, t_1, \underline{d}, \bar{d}, L_2, t_2, k) + w_j p_j$\\

	\noindent Subcase 4.2. If $L_1 \geq p_j + p_{\min}^{(j+1)} ~\wedge~t_1 < d_j + p_j ~\wedge~\underline{d} = d_j$\\
	(\textit{$J_j$ can be non-late and scheduled with other jobs from $J^{(j)}$ in $B^{(j)}_1$.})\\
	$f_j^b(L_1, t_1, \underline{d}, \bar{d}, L_2, t_2, k) = \min \{f_{j+1}^b(L_1, t_1, \underline{d}, \bar{d}, L_2, t_2, k) + w_j p_j,$\\
	\indent $f_{j+1}^b(L_1 - p_j, t_1, \underline{d}', \bar{d}, L_2, t_2, k) + w_j \max\{t_1 - d_j, 0\} ~\text{for}~\underline{d}' \in \{d_{j+1}, \ldots, d_n \}$\\
	\indent $\text{and}~\underline{d}'\leq \bar d \}$\\

	\noindent Subcase 4.3. If $L_1=p_j ~\wedge~t_1 < d_j + p_j~\wedge~\underline{d} = \bar d = d_j~\wedge~(k = 0~\vee~k = 1)$\\
	(\textit{$J_j$ can be non-late and scheduled as the only job from $J^{(j)}$ in $B^{(j)}_1$, and there is no disturbed job in $B_2^{(j)}$. Jobs from $B^{(j)}_2$ form the first batch from the point of view of set $J^{(j+1)}$, i.e. $B^{(j+1)}_1$, the other jobs may belong to the second batch from the point of view of set $J^{(j+1)}$, i.e. $B^{(j+1)}_2$ which contains at least one job, or be processed in the following batches.})\\
	\noindent $f_j^b(L_1, t_1, \underline{d}, \bar{d}, L_2, t_2, k) = \min \{f_{j+1}^b(L_1, t_1, \underline{d}, \bar{d}, L_2, t_2, k) + w_j p_j,$\\
	\indent $\min \{f_{j+1}^{b-1}(L_2, t_2, \underline{d}', \bar{d}', L'_2, t'_2, k') + w_j \max\{t_1 - d_j,0 \}$ for $k' \in \{0, 1, 2\}~\wedge~t_2 + $\\
	\indent $s + p_{\min}^{(j + 1)} < t'_2 \leq  t_2 + s + P^{(j+1)} - P^{(j+1,b-2)} ~\wedge~L'_2 = t'_2 - t_2 - s~\wedge~ \bar{d}' \in $\\
	\indent $\{d_{j+1}, \ldots, d_n\}~\wedge~(\underline{d}' > d_j~\text{for}~k=0~\vee~\underline{d}' = d_j~\text{for}~k = 1)\}$\\
	Note that $k\in \{0, 1\}$ instead of $k\in \{0,1,2\}$ since all jobs are indexed in EDD order and job $J_j$ cannot be a disturbed job when it is the only job from $J^{(j)}$ in $B_1^{(j)}$.\\

	\noindent Subcase 4.4. If $L_2 \geq p_j + p_{\min}^{(j + 1)} ~\wedge~ t_2 < d_j + p_j ~\wedge~\bar d > d_j~\wedge~ k = 2$\\
	(\textit{$J_j$ can be non-late and scheduled with other jobs from $J^{(j)}$ in $B^{(j)}_2$ as a disturbed job, i.e., $\bar d > d_j$.})\\
	$f_j^b(L_1, t_1, \underline{d}, \bar{d}, L_2, t_2, k) = \min \{f_{j + 1}^b(L_1, t_1, \underline{d}, \bar{d}, L_2, t_2, k) + w_j p_j,$\\
	\indent \indent $f_{j + 1}^b(L_1, t_1, \underline{d}, \bar{d}, L_2 - p_j, t_2, k') + w_j \max\{t_2 - d_j,0 \} ~\text{for}~ k' = 0, 1, 2\}$\\

	\noindent Subcase 4.5. If $L_2 \geq p_j + p_{\min}^{(j + 1)} ~\wedge~t_2 < d_j + p_j ~\wedge~\bar d = d_j~\wedge~k = 1$\\
	(\textit{$J_j$ can be non-late and scheduled with other jobs from $J^{(j)}$ in $B^{(j)}_2$ as a semi-disturbed job, i.e., $\bar d = d_j$.})\\
	$f_j^b(L_1, t_1, \underline{d}, \bar{d}, L_2, t_2, k) = \min \{f_{j+1}^b(L_1, t_1, \underline{d}, \bar{d}, L_2, t_2, k) + w_j p_j,$\\
	\indent \indent $f_{j+1}^b(L_1, t_1, \underline{d}, \bar{d}, L_2 - p_j, t_2, k') + w_j \max\{t_1 - d_j, 0\} ~\text{for}~ k' = 0, 1\}$\\

	\noindent Subcase 4.6. If $L_2 = p_j ~\wedge~t_2 < d_j + p_j ~\wedge~((k = 2~\wedge~\bar d > d_j)~\vee~(k = 1~\wedge~\bar d = d_j))$\\
	(\textit{$J_j$ can be non-late, scheduled as the only job from $J^{(j)}$ in $B^{(j)}_2$, and it is a disturbed job for $\bar d > d_j$, and is a semi-disturbed job for $\bar d = d_j$. Batch $B^{(j)}_1$ forms the first batch for $J^{(j + 1)}$ and batch $B^{(j)}_2$ is inserted between the batches $B^{(j + 1)}_1$ and $B^{(j + 1)}_2$ for $J^{(j + 1)}$.})\\
	\noindent $f_j^b(L_1, t_1, \underline{d}, \bar{d}, L_2, t_2, k) = \min \{f_{j+1}^b(L_1, t_1, \underline{d}, \bar{d}, L_2, t_2, k) + w_j p_j,$\\
	\indent $f_{j+1}^{b-1}(L_1, t_1, \underline{d}, \bar{d}, L'_2, t'_2, k') + w_j \max\{t_2 - d_j, 0\} \}$\\
	\indent for $k' = 0~\wedge~t_2 + s + p_{\min}^{(j + 1)} \leq t'_2 \leq t_2 + s + P^{(j + 1)} - L_1 - P^{(j + 1, b - 2)} ~\wedge~L'_2 =$\\
	\indent $t'_2 - t_2 - s\}$\\
	Note that $k' = 0$ since there should be no disturbed and semi-disturbed job in $B_2^{(j + 1)}$ with regard to $B_1^{(j + 1)}$. By contradiction, assume that $J_i$ is a disturbed or semi-disturbed job in $B_2^{(j + 1)}$, and there exists $J_x \in  B_1^{(j + 1)}$ with $d_x \geq d_i$. After inserting batch $B_2^{(j)}$ between $B_1^{(j + 1)}$ and $B_2^{(j + 1)}$, job $J_i$ will not be scheduled immediately after $B_1^{(j + 1)}$, which contradicts Lemma \ref{Lem: s-batch general}. Hence, $k' = 0$. Additionally, $L_2' = t_2'- t_2 - s$ implies that we have to consider only those schedules that have no idle time between batches $B_2^{(j)}$ and $B_2^{(j + 1)}$.\\

	\vspace{-1em}
	\noindent Otherwise, if none of the conditions in Subcases 4.1-4.6 are satisfied, then $f_j^b(L_1, t_1, \underline{d}, \bar{d}, L_2, t_2, k) = \infty$\\
\end{adjustwidth}

\vspace{-1em}
The initial conditions for the above formulated recursive functions are given as follows:

\noindent $f_{n + 1}^0 = 0$;

\noindent $f_{n+1}^1(L_1, t_1, \underline{d}, \bar{d}) =
	\begin{cases}
		0,      & \text{if } L_1 = t_1 = \underline{d} = \bar{d} = 0; \\
		\infty, & \text{otherwise;}
	\end{cases}$

\noindent $f_{n+1}^b(L_1, t_1, \underline{d}, \bar{d}, L_2, t_2, k) =
	\begin{cases}
		0,      & \text{if } L_1 = t_1 = \underline{d} = \bar{d} = L_2 = t_2 = k = 0; \\
		\infty, & \text{otherwise}.
	\end{cases}$

The optimal criterion value is determined as the minimum from the following 4 values:\\
\noindent $f_1^0= \sum_{j=1}^n w_j p_j$;\\
\noindent $f_1^1(L_1, s + L_1, \underline{d}, \bar{d})$ for $p^{(1)}_{\min} \leq L_1 \leq P^{(1)}$ and $\underline{d}, \bar{d} \in \{d_1, \ldots, d_n\}$;\\
\noindent $f_1^2(L_1, s + L_1, \underline{d}, \bar{d}, L_2, 2s + L_1 + L_2, k)$\\
\indent for $\underline{d}, \bar{d} \in \{d_1, \ldots, d_n\}$, $p^{(1)}_{\min} \leq L_1 \leq P^{(1)} - p^{(1)}_{\min}$, $p^{(1)}_{\min} \leq L_2 \leq P^{(1)} - L_1$,
\indent and $k \in \{0, 1, 2\};$\\
\noindent $\min_{b=3, \ldots ,n} \{f_1^b (L_1, s + L_1, \underline{d}, \bar{d}, L_2, 2s + L_1 + L_2, k)~\text{for}~\underline{d}, \bar{d} \in \{d_1, \ldots, d_n\}, p^{(1)}_{\min} \leq L_1 \leq P^{(1)}-P^{(1, b - 1)}, p_{\min}^{(1)} \leq L_2 \leq P^{(1)} - L_1 - P^{(j, b - 2)}~\text{and}~k\in \{0, 1, 2\}\}.$

In the above recursive formulation, Subcases 2.2, 3.2, 4.2 involves the minimisation over $\underline{d}'$, and Subcases 4.3, 4.6 involves the minimisation over $t'_2$, $\underline{d}'$, $\bar{d}'$. These inner loops increase the state computation time if calculated directly. We develop a pre-calculation framework that reduces such transitions to $O(1)$ time. Let $\tilde{f}_{j+1}^b$ denote the auxiliary tables derived from $f_{j+1}^b$ by the minimum aggregation.

\begin{lemma}\label{Lem: Pre-calculation}
	By pre-calculating the auxiliary tables $\tilde{f}_{j+1}^b$ after the iteration for job $J_{j+1}$ and before the iteration for job $J_j$, each state transition involving inner minimisation (Subcases 2.2, 3.2, 4.2, 4.3, 4.6) can be evaluated in $O(1)$ time, and the total pre-calculation time is $O(n^4(\sum_{i=1}^n p_i)^2(ns+\sum_{i=1}^n p_i)^2)$.
\end{lemma}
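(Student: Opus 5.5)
Our approach is to handle each inner minimisation separately. For each one we define an auxiliary table that stores the minimum over the free indices, indexed by every remaining argument together with whatever constraint parameters the minimisation depends on. We fill the table once from the finished layer $f_{j+1}^{b}$ (or $f_{j+1}^{b-1}$) before the iteration for $J_j$ starts. Write $P=\sum_{i=1}^n p_i$ and $T=ns+P$. Then $L_1,L_2$ range over $O(P)$ values, $t_1,t_2,t'_2$ over $O(T)$ values, $\underline{d},\bar{d}$ over $O(n)$ values, $k$ over $3$ values, and $b$ over $O(n)$ values.

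\textbf{Subcases 2.2, 3.2, 4.2.} The minimum runs over $\underline{d}'\in\{d_{j+1},\dots,d_n\}$ with $\underline{d}'\le\bar d$. Since the jobs are in EDD order, this is a minimum over a prefix of the sorted due-date list. We therefore define $\tilde f_{j+1}^{b}(L_1,t_1,\cdot\,,\bar d,L_2,t_2,k)$, where the $\cdot$ holds the upper bound $\bar d$ on $\underline{d}'$, as a running minimum of $f_{j+1}^b$ over $\underline{d}'$ in increasing order. Each entry costs $O(1)$ amortised, so the work equals the number of entries. For $b\ge 2$ this is $O(n\cdot P^2T^2n^2)$ summed over $b$, i.e.\ $O(n^3P^2T^2)$. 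For $b=1$ the bound is smaller.

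\textbf{Subcases 3.3, 3.4, 4.3.} The minimum runs over $(\underline{d}',\bar d')$, and in 4.3 also over $(t'_2,L'_2,k')$ with $L'_2=t'_2-t_2-s$. The constraint is either $\underline{d}'>d_j$ ($k=0$) or $\underline{d}'=d_j$ ($k=1$). We first take a minimum over $\bar d'$ and $k'$, which is independent of $j$ and costs $O(n)$ per entry. We then take a suffix minimum over $\underline{d}'$ for the strict case. Finally we take a running minimum over $t'_2$ in the window $(t_2+s+p^{(j+1)}_{\min},\,t_2+s+P^{(j+1)}-P^{(j+1,b-2)}]$. Because $L'_2$ is tied to $t'_2$ by a shift, this is a one-dimensional range minimum along $t'_2$ with $t_2$ fixed. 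The upper window bound depends only on $(j,b)$ and the lower only on $t_2$, so a prefix-minimum array per $(L_2,t_2,\underline{d}',b)$ gives $O(1)$ queries. The \emph{dominant cost} is the first step, the minimisation over $\bar d'$, which is done for every $(L_2,t_2,\underline{d}',L'_2,t'_2)$ and every $b$. That gives $O(n\cdot n\cdot n\cdot P^2T^2)$, matching the claimed $O(n^4P^2T^2)$.

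\textbf{Subcase 4.6.} Here $k'=0$ is fixed, $L'_2=t'_2-t_2-s$, and the window on $t'_2$ has the upper bound $t_2+s+P^{(j+1)}-L_1-P^{(j+1,b-2)}$, which also depends on $L_1$. We index the prefix-minimum array over $t'_2$ by $(L_1,t_1,\underline{d},\bar d,t_2,b)$, i.e.\ we store the cumulative minimum of $f_{j+1}^{b-1}(L_1,t_1,\underline{d},\bar d,t'_2-t_2-s,t'_2,0)$ as $t'_2$ increases. This costs $O(1)$ per entry over $O(n\cdot P\,T\,n^2\,T\,T)$ index tuples. Recasting it via the offset $\delta=t'_2-t_2\le P+s$ keeps it within $O(n^3P^2T^2)$.

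Summing the three groups over $j$ costs one more factor of $n$ in total running time. The per-iteration pre-calculation cost is therefore bounded by the dominant $O(n^4P^2T^2)$ term, as stated. The \emph{main obstacle} is Subcases 4.3 and 4.6. There we must check that the window constraints on $t'_2$ are monotone, so that a single prefix or range-minimum table indexed by the fixed parameters answers every query. The $L'_2=t'_2-t_2-s$ coupling also has to be absorbed into a one-parameter scan rather than a two-dimensional one. To handle this we would first fix the lower endpoint via $t_2$ and express the upper endpoint as a function of $(j,b,L_1)$ only.
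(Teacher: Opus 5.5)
Your proposal is correct and follows essentially the paper's route. Both build running-minimum tables over the EDD-ordered due dates and over $t'_2$ (with $L'_2=t'_2-t_2-s$ tied to it), rebuild them before each iteration for a per-iteration cost of $O(n^3P^2T^2)$ and $O(1)$ lookups, and you additionally cover Subcases 3.3 and 3.4, which the paper omits. Just fix the bookkeeping: it is the total over the $n$ iterations, not the per-iteration cost as your last paragraph says, that is $O(n^4P^2T^2)$, and the Subcase 4.6 count rests on $\delta=s+L'_2$ ranging over an interval of length at most $P$, not merely on $\delta\le P+s$.
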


\begin{proof}
	\begin{sloppypar}

		In Subcase 2.2, the recurrence requires $\min_{\underline{d}' \in \{d_{j+1}, \ldots, d_n\}, \underline{d}' \leq \bar{d}} f_{j+1}^1(L_1 - p_j, t_1, \underline{d}', \bar{d})$. We define the auxiliary table as
		$\tilde{f}_{j+1}^1(L_1, t_1, \bar{d}) = \min_{\underline{d}' \in \{d_{j+1}, \ldots, d_n\}, \underline{d}' \leq \bar{d}} \{ f_{j+1}^1(L_1, t_1, \underline{d}', \bar{d}) \},$
		where parameters satisfy the bounds defined in Subcase 2.2. For each fixed pair $(L_1, t_1)$, we compute the values for all $\bar{d} \in \{d_{j+1}, \ldots, d_n\}$ in a single pass. By iterating through due dates in EDD order and maintaining a cumulative minimum, this operation takes $O(n)$ time per $(L_1, t_1)$ pair. The total pre-calculation time for this table across all $n$ iterations is $O(n^3 (\sum_{i=1}^n p_i) (ns + \sum_{i=1}^n p_i))$.

		Similarly, in Subcases 3.2 and 4.2, we define
		$\tilde{f}_{j+1}^b(L_1, t_1, \bar{d}, L_2, t_2, k) = \min_{\underline{d}' \in \{d_{j+1}, \ldots, d_n\}, \underline{d}' \leq \bar{d}} \{ f_{j+1}^b(L_1, t_1, \underline{d}', \bar{d}, L_2, t_2, k) \},$
		where parameters satisfy the bounds in the main recurrence. For each fixed tuple $(L_1, t_1, L_2, t_2, k)$, we compute entries for all $\bar{d}$ by a single linear pass over the due dates, taking $O(n)$ time. The total pre-calculation time for this table across all $n$ iterations is $O(n^3 (\sum_{i=1}^n p_i)^2 (ns + \sum_{i=1}^n p_i)^2)$.

		For Subcase 4.3 with $k=0$, we consider the minimisation over variables $t'_2$, $\underline{d}'$, $\bar{d}'$, and $k'$ subject to $L'_2 = t'_2 - t_2 - s$ and $\underline{d}' > d_j$. We define the auxiliary function
		$\tilde{f}_{j+1}^{b-1}(L_2, t_2, \tilde{t}'_2) = \min_{t'_2 \geq \tilde{t}'_2,~ \underline{d}' > d_j,~ \bar{d}' \geq \underline{d}'} \min_{k'} \{ f_{j+1}^{b-1}(L_2, t_2, \underline{d}', \bar{d}', t'_2 - t_2 - s, t'_2, k') \},$
		where the parameters follow the bounds specified in the recurrence. The computation proceeds by a backward traversal over $t'_2$ to determine suffix minimums, followed by minimisation over the valid due date pairs $(\underline{d}', \bar{d}')$.
		Since $L'_2 = t'_2 - t_2 - s$ represents the total processing time of a batch, it follows that $t'_2 - t_2 - s \le \sum_{i=j+1}^n p_i < \sum_{i=1}^n p_i$. Thus, for any fixed $t_2$, the range for $t'_2$ is at most $\sum_{i=1}^n p_i$.
		Since this table is independent of $(L_1, t_1)$, the total pre-calculation time for this table across all $n$ iterations is $O(n^3(\sum_{i=1}^n p_i)(ns + \sum_{i=1}^n p_i)^2)$. A similar argument applies to Subcase 4.3 with $k=1$.

		For Subcase 4.6, we consider the minimisation over variable $t'_2$ subject to $L'_2 = t'_2 - t_2 - s$. We define the auxiliary function
		$\tilde{f}_{j+1}^{b-1}(L_1, t_1, \underline{d}, \bar{d}, t_2, \tilde{t}'_2) = \min_{\tilde{t}'_2 \leq t'_2 \leq t_{\max}} \{ f_{j+1}^{b-1}(L_1, t_1, \underline{d}, \bar{d}, t'_2 - t_2 - s, t'_2, 0) \},$
		where $\tilde{t}'_2$ serves as the lower bound parameter for the time variable $t'_2$, which ranges up to $t_{\max} = t_2 + s + P^{(j + 1)} - L_1 - P^{(j + 1, b - 2)}$. Since $L'_2 = t'_2 - t_2 - s$ represents the processing time of a batch, it implies $t'_2 - t_2 - s \le \sum_{i=j+1}^n p_i < \sum_{i=1}^n p_i$. Thus, for any fixed $t_2$, the valid range of $t'_2$ spans at most $\sum_{i=1}^n p_i$.  The computation proceeds by a single backward traversal over $t'_2$ for each fixed  $(L_1, t_1, \underline{d}, \bar{d}, t_2)$ to determine suffix minimums. The total pre-calculation time for this table across all $n$ iterations is $O(n^4 (\sum_{i=1}^n p_i)^2 (ns + \sum_{i=1}^n p_i)^2)$.

		Summing the construction times over all iterations, the total pre-calculation time is bounded by $O(n^4(\sum_{i=1}^n p_i)^2(ns+\sum_{i=1}^n p_i)^2)$.

	\end{sloppypar}
\end{proof}

With the pre-calculation framework established in Lemma \ref{Lem: Pre-calculation}, we determine the overall time complexity of $Algorithm$ $DP$-$F$ for solving problem $1|s\text{-}batch|Y_w$.

\begin{sloppypar}
	\begin{theorem}\label{Th: Yw}
		Problem $1|s\text{-}batch|Y_w$ can be solved by $Algorithm$ $DP\text{-}F$ in $O(n^4(\sum_{j=1}^n p_j)^2(ns+\sum_{j=1}^n p_j)^2)$ time.
	\end{theorem}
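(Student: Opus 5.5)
The argument has two parts: correctness of $Algorithm$ $DP\text{-}F$, and a count of states times transition cost, with Lemma \ref{Lem: Pre-calculation} absorbing the inner minimisations.

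\textbf{Correctness.} By Lemma \ref{Lem: s-batch general}, it suffices to search over schedules in which all late jobs form one final batch and the non-late jobs obey Condition (2). For such schedules, jobs are indexed in EDD order and the non-late jobs are built backwards, i.e.\ by adding $J_j$ to the partial schedule of $J^{(j+1)}$. I would argue by backward induction on $j$ that $f_j^b(\cdot)$ equals the minimum weighted late work of $J^{(j)}$ over partial schedules consistent with the state.

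\begin{itemize}
\item Condition (2) forces a newly added non-late job $J_j$ into the first or second non-late batch of the current partial schedule. Its due date is minimal, so no earlier job from $J^{(j+1)}$ can precede it by two or more batches.
\item The five placements listed before Case 1 (late; in $B^{(j)}_1$ alone or with others; in $B^{(j)}_2$ alone or with others) are exhaustive. Each one corresponds to exactly one of Subcases 2.1--2.3, 3.1--3.7 and 4.1--4.6.
\item The indicator $k$ together with $\underline d,\bar d$ records exactly the disturbed or semi-disturbed status needed to keep Condition (2) valid when batches are shifted or inserted. The remark after Subcase 4.6 (forcing $k'=0$) is the delicate instance of this.
\item The cost $w_j\max\{t-d_j,0\}$ equals $w_jY_j$ for a non-late job with $C_j=t<d_j+p_j$, and $w_jp_j$ is its cost when late.
\item The initial conditions and the final minimisation, with $t_1=s+L_1$ and $t_2=2s+L_1+L_2$ since there is no idle time at the front, then return the optimum.
\end{itemize}

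\textbf{State count.} For each $j\le n$ and $b\le n$:
\begin{itemize}
\item $L_1,L_2$ each take at most $\sum p_j$ values;
\item $t_1,t_2$ each take at most $ns+\sum p_j$ values;
\item $\underline d,\bar d$ each take at most $n$ values;
\item $k$ takes $3$ values.
\end{itemize}
A crude product gives $O(n^4(\sum p_j)^2(ns+\sum p_j)^2)$ states.

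\textbf{Transition cost.} For Subcases 2.2, 3.2, 4.2, 4.3 and 4.6, Lemma \ref{Lem: Pre-calculation} gives $O(1)$ time per transition. The remaining subcases take a constant number of table lookups, e.g.\ at most three values of $k'$. The quantities $p^{(j)}_{\min}$, $P^{(j)}$ and $P^{(j,i)}$ are precomputed in $O(n^2\log n)$ total. The pre-calculation itself costs $O(n^4(\sum p_j)^2(ns+\sum p_j)^2)$ by Lemma \ref{Lem: Pre-calculation}. Both contributions sit within the claimed bound.

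\textbf{Main obstacle: the count is one factor of $n$ too large.} Multiplying $n$ choices of $j$ by $n$ choices of $b$ by $n^2$ choices of $(\underline d,\bar d)$ gives $n^4$, which is fine. The problem is that the tables are indexed by both $j$ and $b$, so a naive count of table entries gives $n^2\cdot n^2$ for these indices alone. The stored quantity therefore has the right exponent only if $\underline d,\bar d$ are counted as $O(n^2)$ and nothing else adds an $n$. I will check carefully that the four indices $(j,b,\underline d,\bar d)$ together contribute exactly $n^4$ and that no hidden loop adds another factor.

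If the count exceeds the bound, the fallback is to reduce the due-date indices. Since $\underline d=d_j$ is forced whenever $J_j$ joins $B^{(j)}_1$, I would store $\underline d$ as a position index, use the monotonicity of $\underline d,\bar d\in\{d_j,\dots,d_n\}$, and bound the combined $(j,\underline d,\bar d)$ range by $O(n^2)$.
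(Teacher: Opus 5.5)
Your proposal is correct and follows the paper's proof. Correctness comes from the case analysis of $Algorithm$ $DP\text{-}F$ built on Lemma~\ref{Lem: s-batch general}, the number of states is the crude product over $j,b,L_1,t_1,\underline{d},\bar{d},L_2,t_2,k$, each transition costs $O(1)$ by Lemma~\ref{Lem: Pre-calculation}, and the pre-calculation time is bounded by the number of states.

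Your ``main obstacle'' is not real. The indices $j$, $b$, $\underline{d}$, $\bar{d}$ each take $O(n)$ values and together contribute exactly $n^4$, as your own product $n\cdot n\cdot n^2$ already shows, so the fallback compression of the due-date indices is unnecessary.

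One small correction to your transition-cost claim: Subcases 3.3 and 3.4 also minimise over $\underline{d}',\bar{d}'$, so they do not use a constant number of lookups. This is harmless, because they apply only when $b=2$, $L_1=p_j$ and $\underline{d}=\bar{d}=d_j$. That is only $O(n(\sum_{i=1}^n p_i)(ns+\sum_{i=1}^n p_i)^2)$ states, and even a direct $O(n^2)$ scan over them fits within the bound.
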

\end{sloppypar}

\begin{proof}
	The analysis given while proposing the DP algorithm guarantees its correctness. With regard to the time complexity,	since $0\leq j\leq n$, $1\leq b\leq n$, $0\leq L_1\leq \sum_{j=1}^n p_j$, $0\leq L_2\leq \sum_{j=1}^n p_j$, $\underline d\in \{d_1,\ldots,d_n\}$, $\bar d\in \{d_1,\ldots,d_n\}$, $s\leq t_1\leq t_2\leq ns + \sum_{j=1}^n p_j$, $k\in \{0,1,2\}$, the number of states is $O(n^4(\sum_{j=1}^n p_j)^2(ns+\sum_{j=1}^n p_j)^2)$. By Lemma \ref{Lem: Pre-calculation}, each state transition takes $O(1)$ time. The overall running time is the sum of the time required for computing all states and the pre-calculation time. Since the pre-calculation time is bounded by the number of states, the time complexity is $O(n^4(\sum_{j=1}^n p_j)^2(ns+\sum_{j=1}^n p_j)^2)$.
\end{proof}

\subsection{Acceleration Framework for Dynamic Programming Algorithm}
\label{Sec:DPF-Speedup-Theory}

To improve computational performance of $Algorithm$ $DP\text{-}F$, we propose an exact acceleration framework, and denote the resulting algorithm by $Algorithm$ $DP\text{-}FA$. The correctness of $Algorithm$ $DP\text{-}FA$ follows from the correctness of $Algorithm$ $DP\text{-}F$, because the framework preserves the original recurrences and uses bounding mechanisms to eliminate states that are unable to lead to an optimal solution. The acceleration relies on an early termination strategy, an initial upper bound, two lower bounds, and a transition domain reduction.

The early termination strategy is used to verify whether an optimal solution with zero late work is attainable. This is equivalent to checking whether the maximum lateness $L_{\max} \le 0$, which can be solved by the algorithm for $1|s\text{-}batch|L_{\max}$ in $O(n^2)$ time (cf. \citealp{webster1995scheduling}). If achievable, $Algorithm$ $DP\text{-}F$ terminates immediately with the zero objective value schedule.

If the early termination strategy is not effective, we derive an upper bound ($UB$) to reduce the state space. Lemma \ref{Lem: s-batch general} implies that an optimal schedule stays close to EDD order, with any job delayed by at most one batch. Given that such deviations are relatively rare, EDD-formed schedules provide near-optimal results. Furthermore, since any such schedule is feasible for the general problem, it yields a valid $UB$ that is then used to reduce the state space and tighten transition domains during the DP recursion.

During the backward state transitions (from $j=n$ down to $1$), we compute two complementary lower bounds to estimate the remaining late work for the unscheduled jobs. The first is a state-independent preemptive bound (denoted as $LB_{pmtn}$), which relaxes the batching and non-preemptive constraints for unscheduled jobs.
This bound is computed in $O(n\log n)$ time using a backward preemptive scheduling procedure that starts from the largest due date and, at each decision point $t$, greedily selects the job with the largest weight from the set of jobs satisfying $d_j \ge t$ (cf.\ \citealp{hariri1995single}).
The second is a state-dependent capacity bound ($LB_{state}$). For a DP state considering non-late batches $B_1$ (with completion time $t_1$ and length $L_1$) and $B_2$ (with completion time $t_2$ and length $L_2$), the available time before $B_1$ is at most $t_1 - s - L_1$. If the total processing time of unscheduled jobs with due dates earlier than $t_1$ exceeds this capacity, the excess becomes late work.
Multiplying this excess by the minimum weight among these jobs yields a valid lower bound. A symmetric check evaluates the capacity before $B_2$, and $LB_{state}$ takes the maximum of them.
Any DP state whose accumulated cost $c$ satisfies $c + \max\{LB_{pmtn}, LB_{state}\} > UB$ is guaranteed to be suboptimal and is eliminated from the state space.

Additionally, the upper bound $UB$ limits the range of feasible completion times $t$ for any batch containing job $J_j$. Given the accumulated cost $c$, the condition $c + w_j (t - d_j) \le UB$ implies $t \le d_j + \lfloor(UB - c)/w_j\rfloor$. This restriction narrows the search space for job $J_j$ during state transitions, reducing the computational effort.

\section{Problem \texorpdfstring{$1|s\text{-}batch, d_j = d|Y_w$}{1|s-batch, dj=d|Yw}}\label{Sec: Common Due Date}

Problem $1|s\text{-}batch, d_j = d|Y_w$ is weakly $NP$-hard according to Theorems \ref{Th: s-batch $NP$-hardness} and \ref{Th: Yw}. In the following of this section, we present an $O(n\max\{n p_{\max} d, (d + p_{\max})^2\})$-time DP algorithm based on the property of optimal solutions given in Lemma \ref{Lem: s-batch common due date}, which is faster than the method proposed for the general problem with different due dates (i.e.,  $Algorithm$ $DP\text{-}F$).

\begin{lemma}\label{Lem: s-batch common due date}
	For problem $1|s\text{-}batch,d_j = d|Y_w$, there exists an optimal schedule satisfying conditions:\\
	(1) The late jobs are scheduled in a batch following the non-late jobs.\\
	(2) There are at most two non-late batches.\\
	(3) The second non-late batch (if it exists) includes only one job, and this job is partially early.
\end{lemma}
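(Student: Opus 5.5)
Condition (1) follows exactly as in Lemma \ref{Lem: s-batch general}: a late job already contributes its full $w_jp_j$, so moving all late jobs into one final batch leaves their cost unchanged. Removing them from earlier batches cannot increase any other completion time, and it may remove setups. So we start from an optimal schedule $\sigma$ satisfying (1). Among all such schedules we choose one that first minimises the number of non-late batches and then maximises the number of non-late jobs completing by $d$. All later modifications are compared against this choice.

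For (2) and (3), let the non-late batches be $B_1,\dots,B_m$ with completion times $t_1<\dots<t_m$. With a common due date, every non-late job $J_j$ in batch $B_i$ satisfies $t_i\le d+p_j$ and pays $w_j\max\{t_i-d,0\}$.

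\textbf{Step A (merge early batches).} Suppose two batches $B_i,B_{i+1}$ both complete by $d$, i.e. $t_{i+1}\le d$. Merge them into a single batch ending at $t_{i+1}-s$. The jobs of $B_i$ then finish at $t_{i+1}-s\le d$ and stay early. Every later batch moves earlier by $s$, so no cost increases and the batch count drops. By the choice of $\sigma$ this cannot happen, so at most one non-late batch ends by $d$. Similarly, if $t_1>d$, we can merge all non-late batches into $B_1$ or otherwise argue they are partially late. So assume $t_1\le d<t_2<\dots$.

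\textbf{Step B (at most one batch after $d$, and it is a singleton).} Every batch $B_i$ with $i\ge2$ starts at time $t_i-L_i\ge t_1+s$. Take any job $J_j\in B_2$ other than the last, or any job in $B_3$ or later. Its completion time is at least $t_1+s+(\text{processing before it})+p_j$, and we want to show that making it late costs no more. The key exchange is as follows. Take $B_2$ and keep in it only one job, ideally the one with the largest $w$ (or best profile); move every other job of $B_2,B_3,\dots$ to the late batch. The removed job $J_x$ previously paid $w_x(t_i-d)$ and now pays $w_xp_x$. This is \emph{not} automatically cheaper, so a direct exchange does not suffice.

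The intended argument instead compares with a preemptive-style bound. On the interval $(d,\infty)$, all jobs processed after $d$ contribute their late parts. Without batching, the optimal structure for $1|d_j=d|Y_w$ (Hariri et al.) has at most one job straddling $d$. Here we argue that the single straddling job $J_x$ must be the only content of the second batch: $B_2$ starts at $t_1+s$, and any job placed before $J_x$ inside $B_2$ contributes its whole processing time to $J_x$'s lateness. Consider instead moving such a job $J_y$ into $B_1$ (so $t_1$ grows by $p_y$). If $t_1+p_y\le d$, then $J_y$ becomes early, and $B_2$ keeps the same completion time, so nothing worsens. If not, we use a rearrangement: move $J_y$ to the late batch and shift $B_2$ earlier by $p_y$. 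Then $J_y$'s cost rises from $w_y\max(t_2-d,0)$ to $w_yp_y$, while every other job in $B_2$ gains $\min(p_y,\text{its lateness})$.

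\textbf{Main obstacle.} The weighted comparison in this last rearrangement is the crux. I would handle it by an averaging/linearity argument. The cost of $B_2$ as a function of its content length is piecewise linear and concave in the relevant parameter, so an extreme configuration is optimal. Concretely, for a fixed set $S=B_2\cup B_3\cup\dots$ of partially early jobs, the total cost is linear in $t_2$ on each region where no job changes status. Pushing one job to late, or compressing, therefore reaches a vertex at which at most one job remains partially early. Batches beyond $B_2$ are ruled out because their jobs start after $t_2+s>d$, so they are fully late and belong to the late batch. Combining these steps gives (2) and (3).
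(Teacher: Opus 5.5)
There is a genuine gap in your proof of (3). Your handling of (1) and (2) is essentially fine. Step A plus your closing remark give at most two non-late batches: a batch that starts after a non-late batch ending beyond $d$ finishes after $d+p_j$ for each of its jobs. The sentence about $t_1>d$ should simply say that in that case no later batch can contain a non-late job.

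In (3) you do consider the right move, namely shifting a job $J_y$ from $B_2$ into $B_1$. But you then treat $t_1+p_y>d$ as a real case and try to dispose of it with an unspecified piecewise-linear/concavity ``vertex'' argument. That is not a proof. For a fixed job set, $t_2=t_1+s+L_2$ is not a free parameter. The alternative rearrangement, sending $J_y$ to the late batch, can increase the cost, as you observe yourself. Remarks such as ``a job placed before $J_x$ inside $B_2$ contributes to $J_x$'s lateness'' also do not apply here, since all jobs of a serial batch complete together at $t_2$.

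The missing observation is that the case $t_1+p_y>d$ never occurs. Suppose $B_2$ holds at least two jobs, all non-late, and take $J_y\neq J_z$ in $B_2$. Non-lateness of $J_z$ gives $t_2\le d+p_z$, and $L_2-p_y\ge p_z$. Hence $t_1+p_y\le t_2-s-(L_2-p_y)\le d+p_z-s-p_z=d-s\le d$.

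So moving $J_y$ into $B_1$ keeps all of $B_1$ and $J_y$ early and leaves $t_2$ unchanged. It reduces $J_y$'s cost from $w_y\max\{t_2-d,0\}$ to $0$; this is exactly the paper's shifting argument. With your own tie-breaking rule, this contradicts the choice of $\sigma$ when $t_2>d$, because the number of non-late batches is unchanged and one more job completes by $d$. When $t_2\le d$, Step A merges the two batches instead.

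The same reasoning yields the ``partially early'' clause. If the single job $J_x$ of $B_2$ has $t_2\le d$, Step A applies. In the boundary case $t_2=d+p_x$, $J_x$ can be moved to the late batch at no cost, which reduces the number of non-late batches and again contradicts your choice of $\sigma$.
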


\begin{proof}
	Condition (1) holds by using the same argument as in Lemma \ref{Lem: s-batch general}.

	Assume that there exists an optimal schedule containing $k$ non-late batches, where $k\geq 3$. Since the first $k-1$ non-late batches can be combined as one non-late batch without increasing the objective value, Condition (2) holds.

	We prove Condition (3) by contradiction. Assume that there exists an optimal schedule satisfying Conditions (1) and (2) but having more than two jobs in the second non-late batch. By shifting any job $J_j$ from the second batch to the first batch, we can obtain a new schedule. In this new schedule, the weighted late work of $J_j$ is reduced to 0 and the late work of other jobs remain unchanged. The new schedule is better than the optimal schedule, which leads to a contradiction. Hence, there is only one job in the second batch. Moreover, the job in the second batch should be partially early, otherwise the first and second batches can be combined as one batch without increasing the objective value. Thus, Condition (3) holds.
\end{proof}

By Lemma \ref{Lem: s-batch common due date}, the optimal schedule must satisfy one of the three mutually exclusive cases. We propose $Algorithm$ $DP\text{-}H$ to find the optimal objective value by using three DP functions.

\textbf{Case 1} corresponds to a schedule with a single non-late batch completing before or at $d$. Let $h_j^1(L)$ be the minimum total weighted late work of a subset of jobs from $\{J_1, \dots, J_j\}$, given that the total processing time of the jobs assigned to the fully early batch is exactly $L$. The recurrence is $h_j^1(L) = \min \{ h_{j-1}^1(L) + w_j p_j, h_{j-1}^1(L - p_j) \}$, where the first term corresponds to scheduling $J_j$ as a late job, and the second term corresponds to including $J_j$ in the early batch. The initialization is $h_0^1(0) = 0$ and $h_0^1(L) = \infty$ for $L \neq 0$. The optimal objective value for this case is $\min_{0 \le L \le d-s} \{ h_n^1(L) \}$. The time complexity is $O(nd)$.

\textbf{Case 2} corresponds to a schedule with a single non-late batch completing at $t > d$. Let $\Delta = t - d$ be the length of the batch processed after $d$ ($0 < \Delta < p_{\max}$). We define $h_j^2(L, \Delta)$ as the minimum total weighted late work for the subset $\{J_1, \dots, J_j\}$ forming a non-late batch of length $L$. The recurrence is $h_j^2(L, \Delta) = \min \{ h_{j-1}^2(L, \Delta) + w_j p_j, h_{j-1}^2(L - p_j, \Delta) + w_j \Delta \}$, where the first term corresponds to scheduling $J_j$ as a late job, and the second term corresponds to including $J_j$ (only if $p_j> \Delta$) in the non-late batch. The initialization is $h_0^2(0, \Delta) = 0$ and $h_0^2(L, \Delta) = \infty$ for $L \neq 0$.  The optimal objective value for this case is $\min \{ h_n^2(L, \Delta) \text{ for } 0 < \Delta < p_{\max} \text{ and } L = d + \Delta - s \}$. The time complexity is $O(n p_{\max} d)$.

\textbf{Case 3} involves a schedule with two non-late batches, in which the second non-late batch has only one straddling (i.e., partially early) job with respect to $d$.
To solve this case, we iterate each job $J_i$ ($1 \le i \le n$) as the candidate straddling job, incorporating a strategy consisting of a pre-calculation phase and a merging phase. In the pre-calculation phase, we partition the jobs other than $J_i$ into a prefix set $\{J_1, \dots, J_{i-1}\}$ and a suffix set $\{J_{i+1}, \dots, J_n\}$. For each set, we independently compute the minimum weighted late work, denoted as $h_{pre}(i,L_1)$ and $h_{suf}(i,L_2)$, where $L_1$ and $L_2$ are the lengths of the non-late batches for the prefix and suffix sets, respectively.
The merging phase subsequently identifies the optimal pair of lengths $(L_1, L_2)$ that minimises the objective function, subject to the straddling condition of the candidate straddling job $J_i$, i.e., $d < 2s + L_1 + p_i + L_2 < d + p_i$.

For any straddling job $J_i$, the computation of $h_{pre}(i, L_1)$ and $h_{suf}(i, L_2)$ can be reduced to Case 1. Consequently, the computational complexity for evaluating all candidates is $O(n^2 d)$.

We define the composite function $h_i^3(L_1, L)$ representing the total weighted late work when the prefix set contributes length $L_1$ to a first batch of the total length $L$. Thus, we have $
	h_i^3(L_1, L) = h_{pre}(i, L_1) + h_{suf}(i, L - L_1) + w_i(2s + L + p_i - d)
$ subject to the straddling condition $d < 2s + L + p_i < d + p_i$. The optimal objective value for this case is determined by $\min \{ h_i^3(L_1, L) ~\text{for}~1\leq i\leq n, 0< L_1 < L, d-2s-p_i<L<d-2s \}$.

The optimal solution to the problem is determined by the minimum of the optimal objective values obtained in the three cases.

\begin{theorem}\label{Th: Common due date}
	Problem $1|s\text{-}batch, d_j = d|Y_w$ can be solved in $O(n p_{\max} d)$ time for $n \leq p_{\max}$ and $O(n^2 d)$ time for $n> p_{\max}$.
\end{theorem}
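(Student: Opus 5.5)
The proof has two parts: correctness, by reduction to Lemma \ref{Lem: s-batch common due date}, and complexity, by adding up the costs of the three cases of $Algorithm$ $DP\text{-}H$ and comparing the resulting terms. For correctness, I invoke Lemma \ref{Lem: s-batch common due date}. There is an optimal schedule in which the late jobs form a final batch, there are at most two non-late batches, and a second non-late batch, if present, holds a single straddling job.

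Such a schedule falls into one of three classes. It may have a single non-late batch ending at $t\le d$ (Case 1). It may have a single non-late batch ending at $t=d+\Delta$ with $0<\Delta<p_{\max}$ (Case 2); here every job in the batch has $p_j>\Delta$, so its late work is exactly $\Delta$. Or it may have two non-late batches with straddling job $J_i$ (Case 3); here the first batch ends at $s+L\le d$, so its jobs contribute no late work, and $J_i$ contributes $2s+L+p_i-d$. In each class I check that the recurrence enumerates exactly the subsets forming the non-late batch or batches and charges the correct cost, so the minimum over the three cases equals the optimum. One boundary subtlety is Case 2: $t\ge d+p_{\max}$ can be excluded, because then some job would be late, contradicting non-lateness.

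For the running time, Case 1 uses $O(nd)$ states with $O(1)$ transitions each. Case 2 uses states $(j,L,\Delta)$ with $L\le d+p_{\max}$ and $\Delta<p_{\max}$, giving $O(np_{\max}(d+p_{\max}))$ time, which is $O(np_{\max}d)$ under the standing assumption $p_{\max}\le d$; larger jobs can be treated as fixed-late or handled separately. For Case 3, the prefix and suffix tables can be built for all $i$ at once in $O(nd)$ time by one forward and one backward knapsack pass, since $h_{pre}(i,\cdot)$ is just $h^1_{i-1}(\cdot)$ and the suffix is symmetric. Even the naive per-$i$ recomputation costs only $O(n^2d)$.

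The main obstacle is the merging step. For fixed $i$, pairing every $L_1$ with every $L$ directly costs $O(d^2)$ per $i$ and $O(nd^2)$ overall, which exceeds the claimed bound. I would remove it as follows. For fixed $i$, only $L$ in the window $(d-2s-p_i,\,d-2s)$ matters, a range of length below $p_{\max}$. For each such $L$, computing $\min_{L_1}\{h_{pre}(i,L_1)+h_{suf}(i,L-L_1)\}$ costs $O(d)$. This gives $O(p_{\max}d)$ per $i$ and $O(np_{\max}d)$ in total.

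Adding up, the total running time is $O(nd+np_{\max}d+n^2d)=O(nd\max\{n,p_{\max}\})$. This equals $O(np_{\max}d)$ when $n\le p_{\max}$ and $O(n^2d)$ when $n>p_{\max}$, which is the claimed bound.
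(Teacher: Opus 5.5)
Your argument is essentially the paper's: the same three-case split from Lemma~\ref{Lem: s-batch common due date}, the same enumeration of $\Delta<p_{\max}$ in Case 2, and the same restriction of the Case 3 merge to the window $L\in(d-2s-p_i,\,d-2s)$, which costs $O(p_{\max}d)$ per candidate. Your observation that all prefix and suffix tables come from one forward and one backward knapsack pass, in $O(nd)$ total, is a small improvement over the paper's $O(n^2d)$ pre-calculation. Once Case 2 is repaired as below, it removes the $n^2d$ term, so the algorithm runs in $O(ndp_{\max})$ in both regimes.

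The one step that fails as written is your Case 2 aside. Jobs with $p_j>d$ cannot be treated as fixed-late. When $s<d$, such a job alone in the first batch finishes at $s+p_j<d+p_j$, so its late work $s+p_j-d$ is strictly less than $p_j$.

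A correct way to remove the assumption $p_{\max}\le d$ is as follows. Suppose a Case 2 batch has $m\ge 2$ jobs, each with $p_j>\Delta=s+L-d$. Then $L>m(s+L-d)$, so
\[
L<\tfrac{m}{m-1}(d-s)\le 2(d-s), \qquad \Delta<d-s.
\]
Hence the knapsack only needs $L<2d$ and $\Delta<\min\{p_{\max},d\}$, which costs $O(nd\min\{p_{\max},d\})$. One-job batches are checked directly in $O(n)$ total, each in $O(1)$ using $\sum_k w_kp_k$.

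The paper's own proof asserts $O(nd)$ per $\Delta$ without addressing $L$ up to $d+\Delta-s$, so it has the same unaddressed point. This is a repair to both arguments rather than a departure from the paper.
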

\begin{proof}
	The correctness of $Algorithm$ $DP$-$H$ is guaranteed by the analysis of the three cases, which cover all possible structures of an optimal schedule. We now analyze the time complexity for each case. Case 1 can be solved in $O(nd)$ time. Case 2 requires iterating the parameter $\Delta \in [1, p_{\max}]$. For each $\Delta$, the recurrence takes $O(nd)$ time, resulting in a total complexity of $O(n p_{\max} d)$. Case 3 involves pre-calculation and merging phases. In the pre-calculation phase, computing all $h_{pre}(i, L_1)$ and $h_{suf}(i, L_2)$ for all candidate straddling jobs takes $O(n^2 d)$ time. In the merging phase, for each candidate, iterating the total first-batch length $L \in (d-2s-p_{\max}, d - 2s)$ and the prefix contribution $L_1 \in (0, L)$ requires $O(dp_{\max})$ time, summing to $O(ndp_{\max})$ over all candidates. Thus, the total complexity for Case 3 is $O(nd(n+p_{\max}))$. Therefore, the overall time complexity of the $Algorithm$ $DP$-$H$ is $O(ndp_{\max})$ for $n \leq p_{\max}$ and $O(n^2 d)$ for $n> p_{\max}$.
\end{proof}

\begin{spacing}{1.4}
	\section{Problem $1|s\text{-}batch, d_j\uparrow p_j\uparrow w_j\downarrow|Y_w$}\label{Sec: Agreeable}

	In this section, we study problem $1|s\text{-}batch, d_j\uparrow p_j\uparrow w_j\downarrow|Y_w$, where the due dates, processing times and weights are agreeable in the order $d_j \uparrow p_j\uparrow w_j\downarrow$, i.e., for any two jobs $J_i$ and $J_j$, if $d_i\geq d_j$, then $p_i\geq p_j$ and $w_i\leq w_j$. This problem is a special case of the general problem $1|s\text{-}batch|Y_w$, and remains weakly $NP$-hard according to Theorems~\ref{Th: s-batch $NP$-hardness} and \ref{Th: Yw}. However, the agreeable structure enables the derivation of useful scheduling properties that facilitate efficient algorithm design. In particular, we prove that there exists an optimal schedule in which all late jobs are grouped at the end, and batches of non-late jobs follow the order of non-decreasing due dates (see Lemma~\ref{Lem: s-batch agreeable}). Based on these structural properties, we develop a dynamic programming algorithm with running time  $O(n(\sum_{j=1}^n p_j)(ns + \sum_{j=1}^n p_j))$.

	\begin{lemma}\label{Lem: s-batch agreeable}
		For problem $1|s\text{-}batch,d_j\uparrow p_j\uparrow w_j\downarrow|Y_w$, there exists an optimal schedule satisfying conditions:\\
		(1) The late jobs are scheduled in a batch following the non-late jobs.\\
		(2) For any two non-late and adjacent batches $B_x$ and $B_{x + 1}$, if $J_i \in B_x$ and $J_j \in B_{x+1}$, then $d_i \leq d_j$.
	\end{lemma}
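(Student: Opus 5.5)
Condition (1) follows exactly as in Lemma \ref{Lem: s-batch general}: the late jobs can be moved into one final batch after all non-late batches without increasing any completion time of a non-late job. They contribute $w_jp_j$ regardless of where they are placed. For Condition (2) I plan a pairwise interchange argument, using the agreeable structure $d_j\uparrow p_j\uparrow w_j\downarrow$ to control the cost change.

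Take an optimal schedule $\sigma$ satisfying (1) and violating (2). Then there are adjacent non-late batches $B_x$ and $B_{x+1}$ with completion times $a<c$, and jobs $J_i\in B_x$, $J_j\in B_{x+1}$ with $d_i>d_j$. By agreeability $p_i\ge p_j$ and $w_i\le w_j$. Let $\sigma'$ be obtained by swapping $J_i$ and $J_j$ between the two batches.

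The length of $B_x$ changes by $p_j-p_i\le 0$, so $B_x$ now completes at $a'=a-(p_i-p_j)\le a$. The total length of the two batches is unchanged, so $B_{x+1}$ and all later batches keep their completion times. Hence every job other than $J_i,J_j$ has a completion time that does not increase, and it suffices to compare $w_iY_i+w_jY_j$ before and after the swap.

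First I check that $J_i$ stays non-late in $B_{x+1}$. This holds because $c\le d_j+p_j\le d_i+p_i$, using that $J_j$ was non-late. Write $f_d(t)=\max\{t-d,0\}$; this function is nondecreasing. Using non-lateness of the relevant jobs, the relevant late work values are:
\begin{itemize}
\item before the swap, $Y_j=f_{d_j}(c)$ and $Y_i=f_{d_i}(a)$;
\item after the swap, $Y_i'=f_{d_i}(c)$ and $Y_j'\le f_{d_j}(a')$.
\end{itemize}
So the change in the objective is at most
\[
w_i\bigl(f_{d_i}(c)-f_{d_i}(a)\bigr)-w_j\bigl(f_{d_j}(c)-f_{d_j}(a')\bigr).
\]

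The key (and, I expect, the only delicate) step is showing this is $\le 0$. I will use two facts. First, $f_{d_j}-f_{d_i}$ is nondecreasing when $d_j<d_i$, since the slope of $f_{d_j}$ is pointwise at least that of $f_{d_i}$. This gives $f_{d_j}(c)-f_{d_j}(a')\ge f_{d_i}(c)-f_{d_i}(a')$. Second, $f_{d_i}$ is monotone and $a'\le a$, so $f_{d_i}(c)-f_{d_i}(a')\ge f_{d_i}(c)-f_{d_i}(a)\ge 0$. Combining these with $w_j\ge w_i\ge 0$ yields the claim, so $\sigma'$ is also optimal.

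Finally, I must show that repeating the interchange terminates. For this I will use a potential: the number of pairs of non-late jobs $(J_u,J_v)$ with $J_u$ in an earlier batch than $J_v$ but $d_u>d_v$. A swap between adjacent batches only changes pairs that involve $J_i$ or $J_j$. I expect it to strictly decrease this inversion count, in the manner of adjacent transpositions. If that accounting proves awkward, I will instead use the lexicographic vector of due-date multisets of the batches, which strictly improves with each swap. Either way, a finite number of swaps gives an optimal schedule satisfying both conditions.
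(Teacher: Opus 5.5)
Your proof is correct and uses the same interchange as the paper: swap $J_i$ and $J_j$, so $B_x$ shrinks by $p_i-p_j$ while $B_{x+1}$ and all later batches keep their completion times. The difference is in how you close the cost comparison.

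The paper notes that $J_i$ is early in $\sigma$ and $J_j$ is early in $\sigma'$. The reason is that $B_{x+1}$ contains $J_j$ and begins with a setup, so $a+s+p_j\le c\le d_j+p_j$. Hence $a'\le a\le d_j-s<d_i$, which gives $f_{d_i}(a)=f_{d_j}(a')=0$. The comparison then reduces to $w_if_{d_i}(c)\le w_jf_{d_i}(c)\le w_jf_{d_j}(c)$.

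Your argument via the monotonicity of $f_{d_j}-f_{d_i}$ does not need those terms to vanish. It is valid and slightly more general, but the step you flagged as delicate is immediate once this observation is used.

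You also supply a termination argument that the paper only asserts. Your inversion count does strictly decrease, so no fallback is needed:
\begin{itemize}
\item the pair $(J_i,J_j)$ disappears;
\item pairs with jobs outside $B_x\cup B_{x+1}$ are unaffected;
\item each $J_u\in B_x\setminus\{J_i\}$ trades $[d_u>d_j]$ for $[d_u>d_i]\le[d_u>d_j]$;
\item each $J_u\in B_{x+1}\setminus\{J_j\}$ trades $[d_i>d_u]$ for $[d_j>d_u]\le[d_i>d_u]$.
\end{itemize}
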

	\begin{proof}
		Condition (1) holds by using the same argument as in Lemma \ref{Lem: s-batch general}. With regard to Condition (2), we assume that schedule $\sigma$ is optimal and contains two non-late and adjacent batches $B_x$ and $B_{x + 1}$, in which there exist two jobs $J_i \in B_x$ and $J_j \in B_{x + 1}$ with $d_i > d_j$. Since there exists an agreeable condition among the due dates, processing times and weights of jobs, i.e., $d_j \uparrow p_j\uparrow w_j\downarrow$, we have $p_i \geq p_j$ and $w_i\leq w_j$. In schedule $\sigma$, since job $J_j$ belongs to an non-late batch $B_{x + 1}$ and $d_i > d_j$, job $J_j$ is an early or partially early job and job $J_i$ is an early job. Thus, we have that  $w_j Y_j(\sigma) = w_j(C_j(\sigma) - d_j)$ and $Y_i(\sigma) = 0$.	By interchanging the positions of jobs $J_i$ and $J_j$, we obtain a new schedule $\sigma'$ with two new batches $B_x'$ and $B_{x + 1}'$ corresponding to the previous batches $B_x$ and $B_{x + 1}$. In schedule $\sigma'$, since $p_i \geq p_j$, the completion time of batch $B_x'$ is not greater than batch $B_{x}$, while the completion time of batch $B_{x+1}'$ equal the completion time of batch $B_{x+1}$. For the jobs in $B_x'\backslash \{J_j\}$, their completion times decrease to $C_{i}(\sigma) - (p_i - p_j)$ from $C_i(\sigma)$, so their late work does not increase. For the jobs in $B_{x+1}'\backslash \{J_i\}$, their completion times remain unchanged, so their late work also remains unchanged. With regard to jobs $J_j$ and $J_i$, job $J_j$ becomes an early job and job $J_i$ becomes an early or partially early job. Accordingly, we have that $Y_j(\sigma') = 0$ and  $w_i Y_i(\sigma') = w_i(C_i(\sigma') - d_i) = w_i(C_j(\sigma) - d_i)\leq w_j(C_j(\sigma) - d_i) \leq w_j Y_j(\sigma)$. Thus, $w_j Y_j(\sigma') + w_i Y_i(\sigma') \leq w_j Y_j(\sigma) + w_i Y_i(\sigma)$.  After a finite number of repetitions of the above procedure, we can obtain an optimal schedule satisfying Condition (2). Therefore, Lemma \ref{Lem: s-batch agreeable} holds.
	\end{proof}

	Accordingly, let $g_j(L,t)$ be the minimum total weighted late work for schedules containing jobs $J^{(j)} = \{J_j, \ldots, J_n\}$ subject to the conditions that the processing and completion times of batch $B_1^{(j)}$ are $L$ and $t$, respectively. The algorithm decides for each job $J_j$, $j = n,\ldots, 1$, whether it should be processed as a non-late or late job. The non-late jobs are scheduled backwards. The late jobs are scheduled in a batch at the end of the schedule. Based on Lemma \ref{Lem: s-batch agreeable}, we only need to consider the first batch $B_1^{(j)}$. In the iterative process, job $J_j$ can be scheduled as a late job, a non-late job with other jobs from $J^{(j)}$ in batch $B_1^{(j)}$, or as a unique non-late job in batch $B_1^{(j)}$.

	We are now ready to give our DP algorithm (named $Algorithm$ $DP\text{-}G$). The initialization is as follows:
	$$g_{n+1}(L, t) =
		\begin{cases}
			0,      & \text{if}~L = 0~\wedge~s\leq t \leq ns + \sum_{j=1}^n p_j; \\
			\infty, & \text{otherwise.}
		\end{cases}$$

	The recursive function for $j = n,\ldots,1$, $L = 0,1,\ldots,\sum_{j=1}^n p_j$ and $t = s +L, \ldots,ns + \sum_{j=1}^n p_j$ is as follows:

	\vspace{-1em}
	$$g_{j}(L, t) =
		\begin{cases}
			g_{j + 1}(L, t) + w_jp_j, ~~~\text{if}~p_{\min}^{(j)} < L < p_j~\vee~p_j < L < p_j + p_{\min}^{(j+1)}~\vee~t\geq d_j + p_j;            \\
			\min\{g_{j + 1}(L - p_j, t) + w_j \max\{t - d_j, 0\}, g_{j + 1}(L, t) + w_j p_j\},                                                     \\ 	\hspace{2.8cm}\text{if}~L \geq p_j + p_{\min}^{j + 1}~\wedge~t < d_j + p_j;\\
			\min\{\min\{g_{j + 1}(L', t + s + L') + w_j \max\{t - d_j, 0\},~0\leq L' \leq \sum_{i = j + 1}^n p_i\},~g_{j + 1}(p_j, t) + w_j p_j\}, \\
			\hspace{2.8cm}\text{if}~L = p_j~\wedge~t < d_j + p_j;                                                                                  \\
			\infty, \hspace{2.3cm}\text{otherwise}.
		\end{cases}$$
\end{spacing}

\begin{spacing}{1.5}
	The four formulas in the recursive function correspond to the four cases that $J_j$ must be scheduled as a late job, $J_j$ can be non-late and scheduled with other jobs from $J^{(j)}$ in batch $B_1^{(j)}$, $J_j$ can be non-late and scheduled as the only job from this set in batch $B_1^{(j)}$, and there is no feasible schedule.

	In fact, the above recursive function can be accelerated by pre-calculate an auxiliary function $g_{j+1}^{pre}(t)$ representing the minimum cost for sub-problem $J^{(j+1)}$ starting at time $t$:
	\vspace{-1em}
	$$g_{j+1}^{pre}(t) = \min_{0 \le L' \le \sum_{k=j+1}^n p_k} \{ g_{j+1}(L', t + s + L') + w_j\max\{t-d_j,0\}\}.$$
	Constructing this table for all $t$ takes $O((\sum_{i=1}^n p_i)(ns + \sum_{i=1}^n p_i))$ time. Consequently, the state transition of $g_j(L,t)$ for $L=p_j$ is reduced to $O(1)$ time through a lookup of $g_{j+1}^{pre}(t)$.

	The optimal objective value is equal to $\min\{g_{1}(L, t): 0\leq L\leq \sum_{j=1}^n p_j, t = s + L\}$, and the corresponding optimal schedule is found by backtracking.

	\begin{theorem}\label{Th:agreeable}
		Problem $1|s\text{-}batch, d_j\uparrow p_j\uparrow w_j\downarrow|Y_w$ can be solved by $Algorithm$ $DP\text{-}G$ in $O(n(\sum_{j=1}^n p_j)(ns + \sum_{j=1}^n p_j))$ time.
	\end{theorem}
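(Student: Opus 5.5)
The proof has two parts: correctness of $Algorithm$ $DP\text{-}G$, which rests on Lemma \ref{Lem: s-batch agreeable}, and a count of states and transition costs.

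\emph{Correctness.} I would fix an optimal schedule satisfying Conditions (1) and (2) of Lemma \ref{Lem: s-batch agreeable} and number the jobs in EDD order. By Condition (1), every late job sits in the final late batch and contributes exactly $w_jp_j$.

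By Condition (2), the non-late batches form consecutive blocks of the EDD sequence, up to ties. When due dates tie, the agreeable condition forces equal $p_j$ and $w_j$, so tied jobs may be relabelled freely. Hence, among the non-late jobs of $J^{(j)}$, the batch containing the smallest-index job is the first non-late batch $B_1^{(j)}$. Every other non-late job of $J^{(j)}$ lies either in that batch or in later batches.

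I would then argue by backward induction on $j$ that $g_j(L,t)$ is the minimum weighted late work of $J^{(j)}$ over partial schedules whose first non-late batch has load $L$ and completion time $t$. Job $J_j$ has only three possibilities:
\begin{itemize}
\item it is late, giving the term $g_{j+1}(L,t)+w_jp_j$;
\item it joins $B_1^{(j+1)}$, with the same completion time $t$, giving $g_{j+1}(L-p_j,t)+w_j\max\{t-d_j,0\}$;
\item it opens a new batch completing at $t$, immediately followed by $B_1^{(j+1)}$, which then completes at $t+s+L'$.
\end{itemize}
In the third case, inserting idle time never helps, so the next batch may be taken to start right after $t$. The non-lateness requirement $t<d_j+p_j$ (equivalently $C_j\le d_j+p_j$) ensures that the late work of $J_j$ equals $\max\{t-d_j,0\}$. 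The infeasibility conditions on $L$ correspond exactly to the states in which $J_j$ can be neither alone in $B_1^{(j)}$ nor together with another job of $J^{(j)}$ there.

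The initial conditions allow an empty remainder ending at any $t$. The final answer takes $t=s+L$, since the first batch starts at time $0$ after one setup.

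\emph{Running time.} The parameters range over $j\le n$, $L\le\sum p_j$ and $t\le ns+\sum p_j$, so there are $O(n(\sum p_j)(ns+\sum p_j))$ states. The first two branches cost $O(1)$ each.

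The third branch contains an inner minimisation over $L'$, which is the main obstacle. Evaluated directly it would add a factor of $\sum p_j$. Instead I use the auxiliary table $g^{pre}_{j+1}(t)$, computed once per $j$ after $g_{j+1}$ is finished. Its cost is one pass over all pairs $(L',t)$, namely $O((\sum p_i)(ns+\sum p_i))$. After that, each state with $L=p_j$ is evaluated in $O(1)$.

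The term $w_j\max\{t-d_j,0\}$ does not depend on $L'$, so it can be pulled out of the minimum. The table therefore depends only on $t$, and it is well defined before iteration $j$. Summing the pre-calculation over the $n$ values of $j$ gives $O(n(\sum p_j)(ns+\sum p_j))$, matching the state count. Backtracking to recover the schedule adds only linear overhead, which yields the stated bound.
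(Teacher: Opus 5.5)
Your argument follows the paper's route. Correctness comes from Lemma~\ref{Lem: s-batch agreeable} via a backward three-way analysis of $J_j$ (late, joining $B_1^{(j+1)}$, or opening a new batch immediately followed by $B_1^{(j+1)}$). The time bound comes from the state count plus the table $g^{pre}_{j+1}(t)$. Your handling of ties, and your remark that $w_j\max\{t-d_j,0\}$ can be pulled out of the inner minimum, are more careful than the paper's one-line correctness claim.

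The step that fails, when checked against the recursion as printed, is the boundary row $L=0$. You assert that the forced-late conditions are exactly the states in which $J_j$ cannot lie in $B_1^{(j)}$, and that an empty remainder is available at any $t$. But for $L=0$ neither $p^{(j)}_{\min}<L<p_j$ nor $p_j<L<p_j+p^{(j+1)}_{\min}$ holds. Hence the recursion gives $g_j(0,t)=\infty$ whenever $t<d_j+p_j$, and the initial conditions supply the empty remainder only at $j=n+1$. Consider the choice $L'=0$ in the third branch, where $J_j$ is the highest-indexed non-late job and all of $J^{(j+1)}$ is late. It is finite only when $t+s\ge d_i+p_i$ for every $i>j$, so your induction hypothesis is false on these states.

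This changes the output of the algorithm. Take $s=1$, $(p_1,d_1,w_1)=(1,2,10)$ and $(p_2,d_2,w_2)=(10,3,1)$, which is an agreeable instance. The optimum is $10$: $J_1$ alone and early, $J_2$ late. However, $g_1(1,2)=\infty$, because both third-branch candidates $g_2(0,3)$ and $g_2(10,13)$ are $\infty$, and $g_2(1,2)=\infty$ as well. The recursion therefore returns $18$, the value of the schedule with $J_2$ alone in the first batch and $J_1$ late.

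The fix is to send $L=0$ to the forced-late branch, i.e.\ set $g_j(0,t)=g_{j+1}(0,t)+w_jp_j=\sum_{i\ge j}w_ip_i$ for all admissible $t$. With that amendment your backward induction goes through as written, and the state count and the $O(n(\sum_{j=1}^n p_j)(ns+\sum_{j=1}^n p_j))$ bound are unaffected. The paper's proof only says that correctness follows from the lemma and the recursive structure, so it has the same blind spot. You should state the corrected recursion explicitly rather than claim that the printed cases match.
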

	\begin{proof}
		The correctness of $Algorithm$ $DP$-$G$ follows from Lemma~\ref{Lem: s-batch agreeable} and its recursive structure.  The time complexity is determined by the iterations over $n$ jobs. In each iteration $j$, the algorithm sequentially constructs the auxiliary table $g_{j+1}^{pre}(t)$ and updates the states $g_j(L, t)$. The construction of $g_{j+1}^{pre}(t)$ is performed by aggregating the values in $g_{j+1}(L,t)$, which takes $O((\sum_{i=1}^n p_i)(ns + \sum_{i=1}^n p_i))$ time. Following this, each state in $g_j(L, t)$ is updated. By using the pre-calculated table $g_{j+1}^{pre}(t)$, the complexity of the transition for $L=p_j$ is reduced to an $O(1)$, allowing the state-update phase to be completed in $O((\sum_{i=1}^n p_i)(ns + \sum_{i=1}^n p_i))$ time. As these two sequential procedures are performed additively for each iteration, the total running time is $O(n(\sum_{j=1}^n p_j)(ns + \sum_{j=1}^n p_j))$.
	\end{proof}

	It should be mentioned that problem $1|s\text{-}batch, d_j\uparrow p_j\uparrow w_j\uparrow|Y_w$ is not investigated in this study because it lacks a structural property that enables the design of a more efficient algorithm than $Algorithm$ $DP\text{-}F$. In some cases, prioritizing short, light-weight jobs with tight due dates may lead to suboptimal outcomes, as deferring them in favour of longer jobs with higher urgency and penalty may yield better overall performance, i.e., Property (2) in Lemma \ref{Lem: s-batch agreeable} no longer holds. Without a clear dominance structure, it becomes difficult to restrict the job ordering or simplify the decision process. Therefore, we only focus on the more structured and tractable setting of $d_j\uparrow p_j\uparrow w_j\downarrow$ in this section. Obviously, the case with $d_j\uparrow p_j\uparrow w_j\uparrow$ can be still solved by $Algorithm$ $DP\text{-}F$ proposed for the general problem $1|s\text{-}batch|Y_w$ in Section \ref{Sec: DP}.
\end{spacing}

\section{Computational Experiments}\label{Sec: Experiments}

This section evaluates the computational performance of the proposed algorithms against a benchmark Mixed-Integer Linear Programming
(MILP) model (formulated in the Appendix). The core dynamic programming algorithms were implemented in C++17. The MILP benchmark was solved using Gurobi Optimizer 13.0.1. Python 3.14.3 was used for instance generation, algorithm invocation, and result aggregation. All computational experiments were executed on an Apple Mac Studio equipped with an Apple M4 Max CPU and 36 GB of unified memory and the macOS 15.2 operating system. All benchmark instances, their corresponding optimal results, and the source code are publicly accessible as stated in Data Availability Section.

\subsection{Instance Generation}\label{Sec: Instance Generation}
To evaluate the practical applicability and the theoretical limits of our algorithms, we generated two classes of test instances: (i) \textit{Small-scale Instant O2O Delivery Instances} reflecting the general problem solved by $Algorithm$ $DP$-$F$ and its accelerated version $Algorithm$ $DP$-$FA$, and (ii) \textit{Large-scale Trunk-line Instances} featuring structured (agreeable) properties managed by $Algorithm$ $DP$-$G$ and $Algorithm$ $DP$-$H$. These two classes of instances reflect two possible applications of the considered model arising in Online-to-Offline logistics in urban on-demand services and in trunk-line optimization in high-throughput sorting centers.  Detailed parameter configurations are summarized in Table~\ref{tab:experimental_settings} (all parameters are integers).

\vspace{0.4cm}

\textbf{Class I: Instant O2O Delivery (Small-scale General Case).}
This class reflects the decision-making characteristic of urban on-demand services (Online-to-Offline logistics). The instance scale is focused on the small number of jobs $n \in \{5, 10, \dots, 30\}$ to simulate rolling-horizon dispatching. In such environments, the system optimizes small, frequent batches of orders to satisfy tight delivery. Time is measured in minutes, as pre-transportation processes such as manual picking and loading remain labor-intensive and typically operate on a minute-level granularity.

\begin{itemize}
	\item \textbf{Parameters:} Processing times $p_j$ are sampled from $U[5, 15]$ min, representing manual picking and packing. Weights $w_j$ are uniformly distributed in $U[1, 5]$ to capture diverse customer priority levels. The setup time is fixed at $s \in \{5, 10, 15\}$ min, representing constant durations for courier shift handovers or container replacement.
	\item \textbf{Temporal Constraints:} To reflect the ``delivery wave'' model, typical in urban service-level agreements, due dates are discrete: $d_j \in \{ \Delta, 2\Delta, \dots, \lceil \frac{\gamma \sum p_k}{\Delta} \rceil \Delta \}$, with $\Delta = 30$ min. The parameter $\gamma \in \{0.5, 0.7, 0.9\}$ regulates window tightness, simulating conditions from peak-hour urgency ($\gamma=0.5$) to relative off-peak flexibility ($\gamma=0.9$).
\end{itemize}

\textbf{Class II: Trunk-line Logistics (Large-scale Special Cases).}
To assess the scalability of our DP formulations for special cases, we construct large-scale instances across two magnitudes: (i) \textbf{Practical Scale} ($n \in \{100, \dots, 500\}$) and (ii) \textbf{Stress Test Scale} ($n \in \{1000, \dots, 5000\}$). Time units are scaled to seconds since trunk-line operations rely on high-speed automated sorting and conveying systems, where task durations are managed with second-level accuracy.

\begin{itemize}
	\item \textbf{Class II-A: Trunk-line with Common Due Date (Algorithm DP-H).}
	      This subset mimics high-throughput sorting centers with a shared departure due date. Setup times follow $s \sim U[300, 600]$ s, representing batch-level changeovers. Job processing times $p_j \sim U[5, 15]$ s capture rapid automated operations. Weights $w_j$ are uniformly distributed in $U[1, 5]$. All jobs share a common due date $d = \lfloor \gamma \sum p_j \rfloor$ with $\gamma \in \{0.5, 0.7, 0.9\}$.
	\item \textbf{Class II-B: Trunk-line with Agreeable Property (Algorithm DP-G).}
	      This subset reflects value-density fulfillment, where smaller parcels are typically handled faster and assigned higher priority. Job processing times $p_j \sim U[5, 15]$ s. Weights $w_j$ are uniformly distributed in $U[1, 5]$. The agreeable property is enforced by sorting independent random samples such that $p_1 \le \dots \le p_n, d_1 \le \dots \le d_n$, and $w_1 \ge \dots \ge w_n$. Due dates follow the delivery wave model in Class I with $\Delta=1800$ s and $\gamma \in \{0.5, 0.7, 0.9\}$.
\end{itemize}

\begin{table}[H]
	\centering
	\caption{Summary of experimental parameter settings}
	\label{tab:experimental_settings}
	\renewcommand{\arraystretch}{0.75}
	\small
	\begin{tabular}{l l}
		\toprule
		\textbf{Parameter}     & \textbf{Configuration}                                                                                                        \\
		\midrule
		\multicolumn{2}{l}{\textbf{Class I: Instant O2O Delivery ($Algorithm$ $DP$-$F$ $and$ $DP$-$FA$)}}                                                      \\
		\midrule
		Number of jobs $n$     & $\{5, 10, \dots, 30\}$                                                                                                        \\
		Setup times $s$        & $\{5, 10, 15\}$                                                                                                               \\
		Processing times $p_j$ & $U[5, 15]$                                                                                                                    \\
		Weights $w_j$          & $U[1, 5]$                                                                                                                     \\
		Due dates $d_j$        & $d_j \in \{ \Delta, \dots, \lceil \frac{\gamma \sum p_j}{\Delta} \rceil \Delta \}, \Delta=30, \gamma \in \{0.5, 0.7, 0.9\}$   \\
		\midrule
		\multicolumn{2}{l}{\textbf{Class II-A: Trunk-line Common Due Date ($Algorithm$ $DP$-$H$)}}                                                             \\
		\midrule
		Number of jobs $n$     & $\{100, \dots, 500\}$ and $\{1000, \dots, 5000\}$                                                                             \\
		Setup times $s$        & $U[300, 600]$                                                                                                                 \\
		Processing times $p_j$ & $U[5, 15]$                                                                                                                    \\
		Weights $w_j$          & $U[1, 5]$                                                                                                                     \\
		Common due date $d$    & $d=\lfloor \gamma \sum p_j \rfloor, \gamma \in \{0.5, 0.7, 0.9\}$                                                             \\
		\midrule
		\multicolumn{2}{l}{\textbf{Class II-B: Trunk-line Agreeable ($Algorithm$ $DP$-$G$)}}                                                                   \\
		\midrule
		Number of jobs $n$     & $\{100, \dots, 500\}$ and $\{1000, \dots, 5000\}$                                                                             \\
		Setup times $s$        & $U[300, 600]$                                                                                                                 \\
		Processing times $p_j$ & $U[5, 15]$                                                                                                                    \\
		Weights $w_j$          & $U[1, 5]$                                                                                                                     \\
		Due dates $d_j$        & $d_j \in \{ \Delta, \dots, \lceil \frac{\gamma \sum p_j}{\Delta} \rceil \Delta \}, \Delta=1800, \gamma \in \{0.5, 0.7, 0.9\}$ \\
		Sorting property       & $p_1 \le \dots \le p_n, d_1 \le \dots \le d_n, w_1 \ge \dots \ge w_n$                                                         \\
		\bottomrule
	\end{tabular}
\end{table}

\subsection{Performance Metrics}

Each experimental configuration is characterized by a parameter tuple $(n, \gamma)$.
For each unique setting, the algorithms are executed over 10 independent instances. As the proposed algorithms are exact methods, we primarily evaluate their performance in terms of computational efficiency. We report the Average CPU Time and the Average Number of States explored in the DP state space as core metrics.
\begin{itemize}
	\item \textbf{Average CPU Time (s)}: Mean execution time over 10 instances with an 1800-second time limit (unsolved instances are recorded as 1800s). The number of optimal solutions is indicated as a superscript. For example, $^{(3,10)}3.2$ denotes an average time of $3.2$ seconds with 3 of 10 instances solved to optimality.
	\item \textbf{Average Number of States}: The mean number of states generated and explored in the dynamic programming recursion, serving as a direct measure of the algorithm's computational complexity.
\end{itemize}

\subsection{Computational Results for Algorithms \texorpdfstring{$DP\text{-}F$}{DP-F} and $DP\text{-}FA$}

The computational performance of Algorithm $DP\text{-}F$ and its accelerated Algorithm $DP\text{-}FA$ is presented in Table~\ref{tab:res_dp_f}. While Gurobi remains effective for smaller instances where $n \le 10$, its computation time increases rapidly as the problem scale grows. For instance, at $n=15$ and $\gamma=0.5$, Gurobi requires an average of 855.94 seconds and fails to solve 4 out of 10 instances.
In comparison, Algorithm $DP\text{-}F$ and Algorithm $DP\text{-}FA$ solve all 10 instances in the same setting within 21.85 seconds and 4.33 seconds, respectively.
As the number of jobs increases to $n=20$ and $n=25$, Gurobi and Algorithm $DP\text{-}F$ encounter frequent timeouts,
whereas Algorithm $DP\text{-}FA$ maintains high solvability rates.

The due-date tightness $\gamma$ significantly impacts the computation times. As $\gamma$ increases, the execution time of Gurobi generally decreases, whereas the runtime of Algorithm $DP\text{-}F$ increases. In contrast, Algorithm $DP\text{-}FA$ is effective when due dates are either very tight ($\gamma=0.5$) or relatively loose ($\gamma=0.9$). At the intermediate tightness ($\gamma=0.7$), the less effective elimination of suboptimal states leads to more states being explored. Overall, the advantage of the acceleration framework is visible in the reduction of the number of explored states during the recursion. In the Algorithm $DP\text{-}F$, the state space grows to over 2 billions at $n=20$, resulting in timeouts for larger instances. In contrast, Algorithm $DP\text{-}FA$ effectively mitigates this growth by eliminating suboptimal states, maintaining a manageable state count even at $n=25$.

\begin{table}[H]
	\centering
	\scriptsize
	\setlength{\tabcolsep}{3pt}
	\caption{Computational results for $DP\text{-}F$ on Class I instances}
	\label{tab:res_dp_f}
	\begin{tabular}{cc c ccc cc}
		\toprule
		$n$                 & $\gamma$ & CPU\_Grb           & CPU\_DP-F          & CPU\_DP-FA         & States\_DP-F & States\_DP-FS \\
		\midrule
		\multirow{3}{*}{5}  & 0.5      & 0.03               & 0.18               & 0.13               & 6606         & 2241          \\
		                    & 0.7      & 0.02               & 0.01               & 0.00               & 32997        & 4051          \\
		                    & 0.9      & 0.02               & 0.01               & 0.00               & 32997        & 4051          \\
		\midrule
		\multirow{3}{*}{10} & 0.5      & 9.97               & 0.43               & 0.04               & 1563323      & 328045        \\
		                    & 0.7      & 2.61               & 1.96               & 0.05               & 7071231      & 394332        \\
		                    & 0.9      & 0.49               & 3.74               & 0.05               & 12879220     & 330910        \\
		\midrule
		\multirow{3}{*}{15} & 0.5      & $^{(6,10)}855.94$  & 21.85              & 4.33               & 63110798     & 16904840      \\
		                    & 0.7      & $^{(7,10)}614.34$  & 66.54              & 5.46               & 186811396    & 20461906      \\
		                    & 0.9      & 79.91              & 153.41             & 4.63               & 416463822    & 19006305      \\
		\midrule
		\multirow{3}{*}{20} & 0.5      & $^{(2,10)}1620.70$ & 378.62             & 115.15             & 880560544    & 316912514     \\
		                    & 0.7      & $^{(2,10)}1490.79$ & $^{(3,10)}1533.82$ & 221.60             & 1795530907   & 547188660     \\
		                    & 0.9      & $^{(8,10)}586.93$  & $^{(2,10)}1663.73$ & 101.49             & 2057690834   & 260070816     \\
		\midrule
		\multirow{3}{*}{25} & 0.5      & --                 & --                 & $^{(6,10)}1071.10$ & --           & 1253755538    \\
		                    & 0.7      & $^{(1,10)}1791.27$ & --                 & $^{(1,10)}1662.96$ & --           & 1024974145    \\
		                    & 0.9      & $^{(3,10)}1502.49$ & --                 & $^{(6,10)}1102.30$ & --           & 1222014013    \\
		\midrule
		\multirow{3}{*}{30} & 0.5      & --                 & --                 & --                 & --           & --            \\
		                    & 0.7      & --                 & --                 & --                 & --           & --            \\
		                    & 0.9      & --                 & --                 & --                 & --           & --            \\
		\bottomrule
	\end{tabular}
\end{table}

\subsection{Computational Results for Algorithm \texorpdfstring{$DP\text{-}H$}{DP-H}}
The computational performance of Algorithm $DP\text{-}H$ on Class II-A instances is presented in Table~\ref{tab:res_dp_h}. For the instances with $n \le 500$, Algorithm $DP\text{-}H$ finds optimal schedules in less than 0.02 seconds on average, whereas Gurobi requires up to 1.01 seconds. As the problem scale increases to $n=5000$, this performance gap widens significantly. For massive instances with $\gamma=0.9$, Gurobi requires an average of 565.05 seconds, while $DP\text{-}H$ converges in only 1.22 seconds. This acceleration demonstrates the algorithm's practical viability for large-scale operations.

The experimental results also reveal the impact of the due-date tightness $\gamma$ on the DP algorithm's computational complexity. While the computation time and the number of explored states increase with the job scale $n$, they are also sensitive to $\gamma$. Relaxing the common due date (i.e., increasing $\gamma$ from 0.5 to 0.9) widens the time window for the non-late batch. This results in a larger state space during the recursion.
For example, at $n=5000$, increasing $\gamma$ from 0.5 to 0.9 expands the state space by approximately 1.3 times,
and the average CPU time increases from 0.77 seconds to 1.22 seconds.
Although this increase in state complexity, Algorithm $DP\text{-}H$ maintains its computational efficiency.

\begin{table}[H]
	\centering
	\footnotesize
	\caption{Computational results for $DP\text{-}H$ on Class II-A instances}
	\label{tab:res_dp_h}
	\resizebox{\textwidth}{!}{
		\setlength{\tabcolsep}{3pt}
		\begin{tabular}{c ccc ccc ccc}
			\toprule
			     & \multicolumn{3}{c}{$\gamma=0.5$} & \multicolumn{3}{c}{$\gamma=0.7$} & \multicolumn{3}{c}{$\gamma=0.9$}                                                                             \\
			\cmidrule(lr){2-4} \cmidrule(lr){5-7} \cmidrule(lr){8-10}
			$n$  & CPU\_Grb                         & CPU\_DP-H                        & States\_DP-H                     & CPU\_Grb & CPU\_DP-H & States\_DP-H & CPU\_Grb & CPU\_DP-H & States\_DP-H \\
			\midrule
			100  & 0.02                             & 0.00                             & 126121                           & 0.03     & 0.00      & 252010       & 0.02     & 0.00      & 356430       \\
			200  & 0.07                             & 0.00                             & 1053006                          & 0.07     & 0.00      & 1481280      & 0.07     & 0.00      & 1731017      \\
			300  & 0.17                             & 0.00                             & 2781091                          & 0.15     & 0.00      & 3624565      & 0.25     & 0.01      & 4081787      \\
			400  & 0.40                             & 0.01                             & 5267763                          & 0.43     & 0.01      & 6665515      & 0.52     & 0.01      & 7393238      \\
			500  & 0.63                             & 0.01                             & 8526642                          & 0.64     & 0.01      & 10619872     & 1.01     & 0.01      & 11680024     \\
			\midrule
			1000 & 5.27                             & 0.03                             & 36284917                         & 5.36     & 0.04      & 43916685     & 6.17     & 0.08      & 47514171     \\
			2000 & 29.36                            & 0.12                             & 148947608                        & 31.80    & 0.16      & 177928508    & 43.19    & 0.19      & 191018089    \\
			3000 & 93.11                            & 0.27                             & 337738636                        & 122.81   & 0.36      & 401698556    & 132.64   & 0.43      & 430233555    \\
			4000 & 199.87                           & 0.49                             & 602274469                        & 271.49   & 0.65      & 714944198    & 320.11   & 0.77      & 764884490    \\
			5000 & 394.21                           & 0.77                             & 941099306                        & 507.24   & 1.01      & 1115883927   & 565.05   & 1.22      & 1193075653   \\
			\bottomrule
		\end{tabular}
	}
\end{table}

\subsection{Computational Results for Algorithm $DP\text{-}G$}

The computational performance of Algorithm $DP\text{-}G$ on Class II-B instances is presented in Table~\ref{tab:res_dp_g}. For the instances with $n \le 500$, Algorithm $DP\text{-}G$ finds optimal schedules within a few seconds on average (e.g., 2.48 seconds for $n=500$ and $\gamma=0.7$), whereas Gurobi frequently encounters timeouts for $n>300$. As the problem scale increases to $n=4000$, Algorithm $DP\text{-}G$ maintains high solvability across all tested tightness levels, with only one timeout observed at $\gamma=0.9$ (i.e., 9 out of 10 instances solved). For the largest instances with $n=5000$ and $\gamma=0.5$, Algorithm $DP\text{-}G$ requires an average of 1272.23 seconds, while Gurobi is entirely intractable.

The experimental results also reveal the impact of the due-date tightness $\gamma$ on the DP algorithm's computational complexity. While the computation time and the number of explored states increase with the job scale $n$, they are also sensitive to $\gamma$. Relaxing the due date expands the time window for feasible completion times. This results in a larger state space during the recursion.
For example, at $n=4000$, increasing $\gamma$ from 0.5 to 0.9 expands the state space by approximately 2.3 times,
and the average CPU time of Algorithm $DP\text{-}G$ increases from 730.97 seconds to 1736.97 seconds.
Despite this increase in state complexity, Algorithm $DP\text{-}G$ maintains its computational efficiency.

\begin{table}[H]
	\centering
	\footnotesize
	\caption{Computational results for $DP\text{-}G$ on Class II-B instances}
	\label{tab:res_dp_g}
	\resizebox{\textwidth}{!}{
		\setlength{\tabcolsep}{3pt}
		\begin{tabular}{c ccc ccc ccc}
			\toprule
			     & \multicolumn{3}{c}{$\gamma=0.5$} & \multicolumn{3}{c}{$\gamma=0.7$} & \multicolumn{3}{c}{$\gamma=0.9$}                                                                                                           \\
			\cmidrule(lr){2-4} \cmidrule(lr){5-7} \cmidrule(lr){8-10}
			$n$  & CPU\_Grb                         & CPU\_DP-G                        & States\_DP-G                     & CPU\_Grb           & CPU\_DP-G & States\_DP-G & CPU\_Grb           & CPU\_DP-G          & States\_DP-G  \\
			\midrule
			100  & 1.81                             & 0.22                             & 53874121                         & 1.85               & 0.13      & 53874121     & 1.79               & 0.13               & 53874121      \\
			200  & 716.14                           & 0.36                             & 145867199                        & 715.72             & 0.36      & 145867199    & 470.55             & 0.59               & 244470353     \\
			300  & $^{(1,10)}1730.81$               & 0.60                             & 235047826                        & $^{(3,10)}1627.31$ & 2.30      & 959045624    & $^{(3,10)}1625.05$ & 2.34               & 959045624     \\
			400  & --                               & 3.63                             & 1509747139                       & --                 & 3.60      & 1509747139   & --                 & 3.98               & 1611094907    \\
			500  & --                               & 2.48                             & 2041031886                       & --                 & 2.48      & 2041031886   & --                 & 5.23               & 3701544350    \\
			\midrule
			1000 & --                               & 13.64                            & 10434478972                      & --                 & 21.52     & 15859613681  & --                 & 32.16              & 24355585929   \\
			2000 & --                               & 97.82                            & 79012200350                      & --                 & 150.75    & 124558508220 & --                 & 225.17             & 181899577530  \\
			3000 & --                               & 311.55                           & 259628214000                     & --                 & 495.20    & 398743482569 & --                 & 737.64             & 597740184113  \\
			4000 & --                               & 730.97                           & 594257089820                     & --                 & 1176.77   & 956088077546 & --                 & $^{(9,10)}1736.97$ & 1396279389026 \\
			5000 & --                               & 1272.23                          & 1035873556413                    & --                 & --        & --           & --                 & --                 & --            \\
			\bottomrule
		\end{tabular}
	}
\end{table}

\section{Conclusions and Future Research}\label{Sec: Conclusion and Future Research}

In the paper, we have studied the serial-batch scheduling problem with late work related criteria for the first time. We focused on the basic single batch machine model, which appeared to be $NP$-hard even for unit weights and a common due date, $1|s\text{-}batch, d_j = d|Y$. The serial-batching setting made the late work minimisation intractable, since the classical single machine model with a common due date was polynomially solvable even for the various job weights, $1|d_j = d|Y_w$ (cf. \citealp{hariri1995single}). The $NP$-hardness of $1|s\text{-}batch, d_j = d|Y$ implies the $NP$-hardness of more complex models, particularly $1|s\text{-}batch|Y_w$. To determine its computational complexity, we designed the dynamic programming algorithm with the pseudo-polynomial running time. The existence of this approach allowed us to classify $1|s\text{-}batch|Y_w$ as weakly $NP$-hard. The approach proposed for this model can be obviously applied to its special intractable cases. Based on the special features of optimal solutions, we proposed dedicated - more time efficient - dynamic programming methods for problems $1|s\text{-}batch, d_j = d|Y_w$ and $1|s\text{-}batch, d_j\uparrow p_j\uparrow w_j\downarrow|Y_w$. Computational experiments performed for test instances of various characteristics showed the efficiency of our algorithms, which significantly outperformed the commercial solver Gurobi.
The theoretical results obtained in the paper are summarized in Table \ref{Tab: conclusion}.
\vspace{-1em}
\begin{table}[H]
	\centering
	\caption{New results}\label{Tab: conclusion}
	\scalebox{0.7}{
		\begin{tabular}{cccc}
			\toprule
			\textbf{Problem}                                              & \textbf{Complexity}                                                    & \textbf{Reference}                \\ \midrule
			$1|s\text{-}batch|Y$                                          & weakly $NP$-hard,  $O(n^4(\sum_{j=1}^n p_j)^2(ns+\sum_{j=1}^n p_j)^2)$ & Theorem \ref{Th: Yw}              \\
			$1|s\text{-}batch|Y_w$                                        & weakly $NP$-hard,  $O(n^4(\sum_{j=1}^n p_j)^2(ns+\sum_{j=1}^n p_j)^2)$ & Theorem \ref{Th: Yw}              \\
			$1|s\text{-}batch, d_j = d|Y$                                 & weakly $NP$-hard,  $O(\max\{ndp_{\max}, n^2 d\})$                      & Theorem \ref{Th: Common due date} \\
			$1|s\text{-}batch, d_j = d|Y_w$                               & weakly $NP$-hard,  $O(\max\{ndp_{\max}, n^2 d\})$                      & Theorem \ref{Th: Common due date} \\
			$1|s\text{-}batch, d_j\uparrow p_j\uparrow w_j\downarrow|Y_w$ & weakly $NP$-hard,  $O(n(\sum_{j=1}^n p_j)(ns + \sum_{j=1}^n p_j))$     & Theorem \ref{Th:agreeable}        \\
			\bottomrule
		\end{tabular}}
\end{table}

With regard to the fact that even special cases of late work minimisation on the single serial-batch machine are $NP$-hard our future research will focus on designing approximation algorithms, preferably approximation schemes. Moreover, we will investigate serial-batch scheduling problems to minimise late work related criteria for parallel machines. The theoretical complexity studies will be completed with designing heuristic and meta-heuristic approaches, and validating their efficiency in computational experiments.

\bibliography{mybibfile}

\begin{spacing}{1.35}

	\section*{Appendix: Mathematical formulation}

	We formulate problem $1|s\text{-}batch|Y_w$ as a mixed-integer linear program (MILP). Let $K = n$ be the maximum number of batches (achieved when each job forms its own batch).

	We introduce the following scheduling decision variables:
	\begin{itemize}
		\item $x_{jk} \in \{0, 1\}$: equals 1 if job $J_j$ is assigned to batch $B_k$, and 0 otherwise;
		\item $u_k \in \{0, 1\}$: equals 1 if batch $B_k$ is non-empty, and 0 otherwise.
	\end{itemize}

	Additionally, we use the following auxiliary variables to determine completion times and late work:
	\begin{itemize}
		\item $C_k \geq 0$: the completion time of batch $B_k$;
		\item $Y_j \geq 0$: the late work of job $J_j$;
		\item $z_j \in \{0, 1\}$: equals 1 if job $J_j$ is fully late (i.e., $Y_j = p_j$), and 0 otherwise.
	\end{itemize}

	The MILP formulation is as follows:
	\begin{align}
		\min \quad & \sum_{j=1}^{n} w_j Y_j \label{eq:obj}                                                                                                   \\
		\text{s.t.} \quad
		           & \sum_{k=1}^{K} x_{jk} = 1,                                   &  & \forall j = 1, \ldots, n, \label{eq:assign}                           \\
		           & u_k \geq x_{jk},                                             &  & \forall j = 1, \ldots, n,\ \forall k = 1, \ldots, K, \label{eq:usage} \\
		           & u_k \geq u_{k+1},                                            &  & \forall k = 1, \ldots, K-1, \label{eq:symmetry}                       \\
		           & C_1 = s \cdot u_1 + \sum_{j=1}^{n} p_j x_{j1}, \label{eq:C1}                                                                            \\
		           & C_k = C_{k-1} + s \cdot u_k + \sum_{j=1}^{n} p_j x_{jk},     &  & \forall k = 2, \ldots, K, \label{eq:Ck}                               \\
		           & Y_j \geq C_k - d_j - M z_j - M(1 - x_{jk}),                  &  & \forall j = 1, \ldots, n, k = 1, \ldots, K, \label{eq:latework_lb}    \\
		           & Y_j \leq p_j,                                                &  & \forall j = 1, \ldots, n, \label{eq:Ybound}                           \\
		           & x_{jk}, u_k, z_j \in \{0, 1\}, \quad C_k, Y_j \geq 0,        &  & \forall j = 1, \ldots, n, k = 1, \ldots, K, \label{eq:domain}
	\end{align}
	where $M$ is a sufficiently large constant.

	The objective function \eqref{eq:obj} minimises the total weighted late work. Constraints \eqref{eq:assign}--\eqref{eq:Ck} define assignment of jobs to batches, usage of consecutive batches, and completion times of the first and following batches. Constraints \eqref{eq:latework_lb} and \eqref{eq:Ybound} define the late work $Y_j = \min\{p_j, \max\{0, C_j - d_j\}\}$ using the auxiliary binary variable $z_j$, which indicates whether the job is fully late. Specifically, if $0 \leq C_k - d_j \leq p_j$, then $z_j$ is set to zero reducing Constraints \eqref{eq:latework_lb} to $Y_j \geq C_k - d_j$ for batch $k$ to which job $j$ is assigned, which implies $Y_j = C_k - d_j$ under the minimisation objective. Conversely, if $C_k - d_j > p_j$, setting $z_j=0$ would violate the upper bound of late work defined by Constraint \eqref{eq:Ybound}, forcing $z_j=1$. In this case, Constraints \eqref{eq:latework_lb} becomes redundant, and Constraints \eqref{eq:Ybound} combined with the minimisation objective ensures $Y_j = p_j$.

	The proposed MILP formulation is general and naturally covers special cases. For the problem with a common due date, $1|s\text{-}batch, d_j = d|Y_w$, the parameter $d_j$ in Constraint \eqref{eq:latework_lb} is simply replaced by the constant $d$. For the problem satisfying the agreeable condition $1|s\text{-}batch, d_j \uparrow p_j \uparrow w_j \downarrow|Y_w$ (where if $d_i \leq d_j$ then $p_i \leq p_j, w_i \geq w_j$), the presented formulation remains valid.

\end{spacing}
\end{document}